\let\newfloat\newfloat@ltx
\definecolor{shadecolor}{cmyk}{0,0,0,0.05}
\algrenewcommand{\algorithmiccomment}[1]{\hskip2pt\textbackslash\textbackslash \ #1}
\def\bra#1{\mathinner{\langle{#1}|}}
\def\ket#1{\mathinner{|{#1}\rangle}}
\renewcommand{\part}[2]{\frac{\partial #1}{\partial #2}}
\newcommand{\minus}{  \scalebox{0.45}[1.0]{\( - \)}  }
\newcommand\define[1]{\emph{\textbf{#1}}}
\newtheorem{re}{Remark}
\newtheorem{claim}{Claim}
\newcommand\psm[1]{$\begin{psmallmatrix}#1\end{psmallmatrix}$}
\begin{document}

\title{QuOp: A Quantum Operator Representation for Nodes}

\author{Andrew Vlasic}
\email{avlasic@deloitte.com}
\affiliation{New Business Initiative, Deloitte Consulting LLP, Chicago, IL}

\author{Salvador Aguinaga}
\email{saguinaga@deloitte.com}
\affiliation{Advisory's AI Center of Excellence, Deloitte Advisory LLP, Mishawaka, IN}

\begin{abstract}
We derive an intuitive and novel method to represent nodes in a graph with special unitary operators, or quantum operators, which does not require parameter training and is competitive with classical methods on scoring similarity between nodes. This method opens up future possibilities to apply quantum algorithms for NLP or other applications that need to detect anomalies within a network structure. Specifically, this technique leverages the advantage of quantum computation, representing nodes in higher dimensional Hilbert spaces. To create the representations, the local topology around each node with a predetermined number of hops is calculated and the respective adjacency matrix is used to derive the Hamiltonian. While using the local topology of a node to derive a Hamiltonian is a natural extension of a graph into a quantum circuit, our method differs by not assuming the quantum operators in the representation a priori, but letting the adjacency matrix dictate the representation. As a consequence of this simplicity, the set of adjacency matrices of size $2^n \times 2^n$ generates a sub-vector space of the Lie algebra of the special unitary operators, $\mathfrak{su}(2^n)$. This sub-vector space in turn generates a subgroup of the Lie group of special unitary operators, $\mathrm{SU}(2^n)$. Applications of our quantum embedding method, in comparison with the classical algorithms GloVe (a natural language processing embedding method) and FastRP (a general graph embedding method, display superior performance in measuring similarity between nodes in graph structures.
\end{abstract}

\keywords{Graph Embedding, Quantum Algorithm, Variational Process}
\date{\today}

\maketitle

\section{Introduction}\label{sec:intro}

This manuscript describes a node embedding technique where the representations are a quantum operator, or special unitary matrix, and leverages quantum computing to calculate the similarity score between nodes. The method does not require parameter training, thereby significantly decreasing time to extract information, and performs quite well on graphs with random or limited edge connections. 

Graphs, a branch of mathematics focusing on the study of vertices and edges (abstracting relationship between nodes), hold significant importance in the current landscape of artificial intelligence (or AI), with node/graph embedding the most widely applied \cite{makarov2021survey}.  Given that graphs can be described with matrices, there is a natural extension of techniques that extract the latent information embedded in a graph to quantum computation \cite{aharonov2001quantum,childs2010relationship,wang2020experimental,tanaka2022spatial,malmi2022spatial,goldsmith2023link,vlasic2023scoring}. However, to date, the discipline of quantum computing has yet to derive an analog of the vector representation that compresses and captures the latent information in a node. To clear any potential confusion, there is a distinction from the feature mappings techniques of graphs, which encode the structure of a graph into a quantum circuit for a downstream task \cite{suzuki1976generalized,calude2017qubo,garg2019quantum,verdon2019quantum,henry2021quantum,skolik2023equivariant}, and the representation of a node as an object, such as a vector.   

Taking inspiration from continuous quantum random walks over a graph \cite{aharonov2001quantum,burda2009localization,childs2010relationship}, we derive a method to embed nodes from a given graph to the space of special unitary matrices, or quantum operators, which we denote as \textbf{QuOp}. For an arbitrary node from a symmetric graph, the embedding consists of the local adjacency matrix around this node from a predetermined number of hops. The inner product of two embeddings yields a similarity score in the unit interval, with zero indicating no similarity and one an exact match. The inner products are calculated in a quantum circuit, and thus, represent nodes in a higher dimensional Hilbert space \cite{schuld2019quantum}. For directed acyclic graphs, we employ the so called Hermitian
adjacency matrix to transform the matrix to the proper form of a Hermitian. 

In the construction of the embedding the adjacency matrix serves as the Hermitian operator for the time-independent Schr{\"o}dinger's equation. For the construction, we set time $t=1$ and Planck's constant $\hbar=1$. The constants are fixed since the Hamiltonian evolution around the node is not considered, but only the exponential map from the Lie algebra to the respective Lie group. From this perspective, the collection of local adjacency matrices generates the Lie subalgebra, and hence, generates the Lie subgroup that determines the base gates that reflect respective circuits. Therefore, representing the nodes via the Lie subgroup of quantum operators. This representation to capture the latent information of a graph is best suited for random graphs, that is, a graph with a distribution of edges and distribution of weights on edges, as the topological structure is captured. 

This method is variational free, and therefore, requires no training for applications. Thus, the algorithm has immediate application, bypassing the lag-time for training. Furthermore, small graphs may not have enough information to train robust classical embeddings. However, this is not an issue with the proposed algorithm, as it is able to take advantage of the available information.  

Finally, the number of qubits scale logarithmically with respect to the length of the adjacency matrix. However, the number of gates scales polynomially to density of the graph.  

The organization of this manuscript is as follows. To establish a basis, Section \ref{sec:classical} gives an overview of graphs and graph embedding methods on a classical system. Section \ref{sec:QuOp} explicitly describes the QuOp, but first gives an overview of quantum continuous random walks to motivate the algorithm. Furthermore, within this section, QuOp is compared against similar graph encoding methods and benchmarking against classical computing embedding techniques. For the theoretical basis of QuOp, Section \ref{sec:lie} gives an overview of Lie algebras and Lie groups and then describes how the QuOp algorithm generates a Lie subalgebra and, as a consequence, the Lie subgroup of special unitary operators. Finally, Section \ref{sec:discussion} gives a brief summary of the algorithm and benchmarking, and a description of potential future research, including other methods to represent nodes as quantum operators.

\section{Classical Graph Embedding}\label{sec:classical}

Graph embeddings are extremely important to the field of AI, as they convert unprocessed data into a compact, numerical representation that captures relevant relationships~\cite{pennington2014glove}. These representations make it easier for machines to process and learn from intricate data like text, audio, and images~\cite{mikolov2013efficient}. Particularly, by using embeddings, AI systems can more effectively make predictions, classify, cluster, make recommendations, and understand natural language.

\paragraph{Graph Embeddings} Node embeddings have revolutionized the field graphs and AI through graph-based representation learning. Two of the most influential works in the field include DeepWalk by Perozzi et al.~\cite{perozzi2014deepwalk}, and Node2Vec by Grover and Leskovec \cite{grover2016node2vec}. Both works have been extensively cited and have paved the way for significant advancements in AI.
Perozzi's work introduced an innovative technique for the online learning of latent social representations. This work has been instrumental in developing AI methodologies, particularly in social network analysis, by enabling the use of local information obtained from truncated random walks to learn latent representations. It should be noted that in our experiments, we examine GloVe embeddings, which is a word embedding technique using global co-occurrence statistics to learn vector representaions
~\cite{pennington2014glove}.

On the other hand, Node2Vec~\cite{grover2016node2vec} presented a semi-supervised algorithm for generating node embeddings flexibly and efficiently. The model's balancing of local and global network structures~\cite{wills2020metrics} has been acknowledged for its significance in AI, especially in applications such as link prediction, clustering, and classification. Both works have considerably influenced the direction of research and development, particularly in utilizing node embeddings for network analysis. Okuno et al. \cite{okuno2019graph} proposed a novel graph embedding technique that enhances the ability of graph kernels to estimate graph edit distance. This has led to a wider range of applications of graph embeddings, particularly in tasks that involve the computation of graph similarity~\cite{okuno2019graph,vishwanathan2010graph}.

\paragraph{Text Embeddings} Recent developments in text embeddings achieved significant advancements with the arrival of transformer-based models. Transformer architectures, particularly BERT (Devlin et al. \cite{devlin2018bert}) and its variants, have revolutionized natural language processing by capturing contextual information effectively. The attention mechanism and pre-training strategies in these models have led to improved embeddings, fostering breakthroughs in tasks such as sentiment analysis, named entity recognition, and question answering. Notable works include RoBERTa~\cite{liu2019roberta} and ALBERT~\cite{lan2019albert}, which have refined BERT's architecture, enhancing efficiency and performance in various text-related applications.

\paragraph{Document Embeddings} In the realm of graph theory applications to text, notable advancements in graph embeddings have emerged, leveraging the inherent structure of textual data. Models like GraphSAGE, Hamilton et al. \cite{hamilton2017inductive}, and Graph Attention Networks, Velickovic et al. \cite{casanova2018graph}, have enabled the creation of meaningful representations for documents, capturing relationships between words and sentences. 

A graph $G$ is typically denoted as the pair $(V,E)$ where $V$ is the set of vertices and $E$ is the set edges between connected vertices. Given that two nodes are connected or not is binary, $E$ can be represented as a matrix. For $|V|=n$, each node is given a unique label $1,\ldots,n$. Then we can construct a matrix $A$ that represents $E$ where entry $a_{ij}= \left\{\begin{array}{cc}
   1  & \mbox{if } e(i,j)\in E \\
   0  & \mbox{otherwise}
\end{array} \right. .$  The matrix $A$ is denoted as the \define{adjacency} matrix. 

In general, there is a respective function $w(i,j)$ where $a_{ij} = w(i,j)$. Ergo, the output is the weight on the edge connecting node $i$ to node $j$, and if no edge exists then the value is $0$. If the graph is not weighted then this value is $1$, and if the graph is symmetric then $w(i,j) = w(j,i)$ for all nodes $i$ and $j$. This is the basic graph structure that is utilized to train node embeddings. While this representation captures the structure information of a graph, there is latent information that describes the underlying flow. It is this latent information that we aim to extract.

To extract the latent information within a node $v_i$, the node is ``flattened'' to a vector of dimension $n$, and $n$ is chosen a priori. The general mathematical basis is given through the probability of observing a neighboring node $v_j$ given the anchor node $v_i$. The probability measure is assumed to be a Gibbs measure, which is Markovian and convex \cite{georgii2011gibbs}. The probability measure is represented as follows, see Figure~\ref{fig:node-edge-node} and refer to Zhou et al.~\cite{zhou2023co} for context in general graphs, 
\begin{equation*}\label{eq:node2vec-embedding}
\displaystyle Pr(v_j\mid v_j) = \frac{exp \left( \Theta(v_i)^T \cdot \Theta(v_j) \right)}
                  {\sum_{k \in N(v_i)} {exp\left( \Theta(v_j)^T \cdot \Theta(v_k) \right)}} .
\end{equation*}

\begin{figure}[h!]
    \centering
        \begin{tikzpicture}[auto,
     node distance = 15mm,
every state/.style = {fill=blue!60,text=white, 
                      minimum size=2em, inner sep=1pt, outer sep=1pt},
                > = Stealth,
every edge/.style = {draw, ->}
                        ]
\node (vi) [state, accepting,above, "\psm{\mathbf{x_1}\\\mathbf{x_2}\\\dots}"]   {$v_i$};
\node (vj) [state, above, "\psm{\mathbf{x_1}\\\mathbf{x_2}\\\dots}", right=of vi] {$v_j$};
\path
    (vi)    edge []  (vj)
    (vi)    edge ["\psm{\mathbf{e_1}\\\mathbf{e_2}\\\dots}"]              (vj);
    \end{tikzpicture}
    \caption{Graph schema showing anchor node $v_i$ and neighboring node $v_j$ connected by an edge. Each node and edge can have one or more attributes.}
    \label{fig:node-edge-node}
\end{figure}
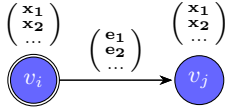

Within Figure~\ref{fig:node-edge-node}, $\Theta(v_i)$ is the vector representation for node $v_i$, $\Theta(v_i)^T \cdot \Theta(v_j)$ is the dot product of the embedded nodes, and $\Theta( \cdot )$ is the mapping derived from the neural network where $\Theta: V \to \mathbb{R}^d$ for a predetermined $d\in \mathbb{N}$. For the local topology to learn form, $N(v_i)$ is denoted as the neighborhood around the vertex $v_i$. Exactly as in Gibbs measure, the denominator is a scaling term, ensuring the probabilities sum to $1$ over all neighboring nodes. This term is the sum of the exponential dot products of the embedding of anchor node with its neighbors.  

Training embeddings is a fairly straightforward process with the natural objective function
\begin{equation*}
    \begin{split}
         \min_{\Theta} & \log\left( \sum_{k \in N(v_i)} exp\left( \Theta(v_j)^T \cdot \Theta(v_k) \right) \right) \\
        & - \log\left( \Theta(v_i)^T \cdot \Theta(v_j) \right),
    \end{split}
\end{equation*}
adjusting accordingly to the graph data and neural architecture. For brevity, a deeper dive is left to the reader.

Over all, graph embedding techniques converts intricate topological structure and relationships within a graph into a compact numerical vector representation. At the core, the objective is to map nodes, edges, or the entire graph into a lower-dimensional vector space while preserving structural and attribute information. A node and or edges often include attributes which reflect characteristics local to the node or node pair. The relationships between a pair of nodes bring together information from neighboring nodes and the neighbors of their neighbors. The rich information mapped to vector space is much easier to apply machine learning models directly~\cite{makarov2021survey}.

\section{QuOp Algorithm}\label{sec:QuOp}

\subsection{Motivation and Description}\label{subsec:motivation}

To motivate the algorithm, we briefly describe the procedure of conducting a continuous quantum random walk over graph \cite{aharonov2001quantum,childs2002example,childs2010relationship,wang2020experimental,malmi2022spatial,tanaka2022spatial,goldsmith2023link,vlasic2023scoring}. 

For a symmetric graph $G$, the continuous random walk requires a time-step $\gamma$ and Hermitian matrix $B$. Then for integer $t$ (required for the time-step) define the Hamiltonian $H_B(\gamma,t) := \exp\left\{ \minus i\gamma t \cdot B \right\}$, which follows from Schr{\"o}dinger's equation where the matrix is independent of time; throughout the manuscript we set Planck's constant $\hbar=1$. Typically, $B$ is the discrete Laplacian of the graph or the adjacency matrix, but may be in other forms~\cite{burda2009localization}, as long as the matrix captures the interactions of the nodes within the graph and the matrix is Hermitian \cite{nielsen2001quantum}. 

\begin{re}\label{re:padding}
There is potential that the size of the Hermitian matrix $B$ is not of a power of $2$, which is required to implement the operator in a quantum circuit. To mitigate the dimension issue, matrices are padded with zeros, where the padding adjusts the matrix to size $2^m \times 2^m$, for $m$ that holds the inequality $\displaystyle 2^{m-1}< \max_{ e\in E } \sqrt{ \mbox{dim}(A^n_e) } \leq 2^m$.
\end{re}

For continuous random walks, observe that the number of qubits required are $n$ for a matrix of size $2^n \times 2^n$. Ergo, the length of the quantum circuit grows logarithmically. Furthermore, the depth of the circuit is dependent on the density of the graph with respect to the edges, and contingent on the native gates of the quantum processor. However, we may state that the growth of the circuit depth will grow at least polynomially.   

\begin{re}\label{re:permutations}
While the graph structure is fixed, the labeling of the nodes to a column/row in the adjacency matrix is open, with many permutations. However, the labeling of the nodes does not affect the dynamics since there exists a permutation matrix $\mathcal{P}$ where, for two adjacency matrices $A$ and $\Tilde{A}$, $A = \mathcal{P}^T \Tilde{A} \mathcal{P}$. 
For more in-depth information for graphs in general, see Bapat \cite{bapat2010graphs}.
\end{re}

For applications, sequential implementations of the operator $H_B(\gamma,t)$ converges to a probability that is contingent on the initial condition. Further discussion is out of scope of the paper; see \cite{aharonov2001quantum,vlasic2023scoring} for in-depth information. 

The QuOp algorithm, with each node $v_i$ and a predetermined number of hops, will use the operator $H_{ B_{v_i} }(\gamma=1,t=1)$, where $B_{v_i}$ is the local adjacency matrix around node $v_i$. Since the algorithm is not concerned with the evolution of the Schr{\"o}dinger's equation, $\gamma$ and $t$ are fixed. Therefore, we take the parameters to be the multiplicative identity. The reason for the choice of the adjacency matrix is thoroughly described in Section \ref{sec:lie}. 

If the graph $G$ is not symmetric, a transformation to the adjacency matrix, called the Hermitian-adjacency \cite{guo2017hermitian}, is applied. For generality, we take the derivation given by Mohar \cite{mohar2020new} for the transformation. In particular, for $\alpha = a + bi$ where $|\alpha|=1$ and $a \geq 0$, the Hermitian adjacency matrix, denoted as $A^{\alpha}$, has entries of the form 
\begin{eqnarray}\label{eq:herm-adj}
a^{\alpha}_{ij} = \left\{
\begin{array}{ll}
      a_{ij}   & \mbox{if } a_{ij} = a_{ji}  \\
       a_{ij}\alpha + a_{ji}\overline{\alpha}  & \mbox{otherwise}
\end{array} 
\right.
\end{eqnarray}
Define the Hermitian adjacency transform function as $\mathrm{hermAdj}( A, \alpha )$. This transformation, while convenient to ensure the adjacency matrix is Hermitian, lifts the real valued dynamic to the complex plane. Thus, capturing the intricate dynamic of the graph in a higher dimensional Hilbert space.

\begin{figure}[!ht]
    \centering
    \begin{subfigure}[b]{.5\textwidth}
        \centering
        \includegraphics[width=260px]{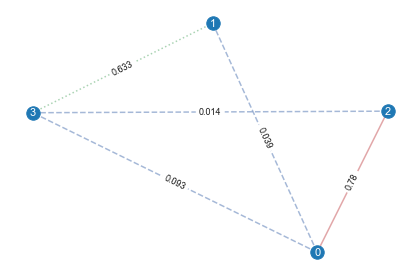}
        \caption{Simple symmetric weighted graph with weights strictly in the unit interval.}
        \end{subfigure}  
      

    \begin{subfigure}[b]{.5\textwidth}
    \centering
    \scalebox{.7}{
    \begin{quantikz}[thin lines] 
        \lstick{$\ket{0}$}  &  \gate{R_x(0.053) } & \gate[2]{R_{xx}(0.093)} & \gate[2]{R_{zx}(0.147)} & \gate[2]{R_{yy}(\minus 0.093)} & \gate[2]{R_{zx}(0.025)} 
       \\ \lstick{$\ket{0}$}  & \gate{R_x(1.413)}  & \qw   & \qw  & \qw & \qw 
    \end{quantikz}
     } \caption{Circuit for node $0$.}
     \end{subfigure}  
     \begin{subfigure}[b]{.5\textwidth}
    \centering
    \scalebox{.7}{
    \begin{quantikz}[thin lines] 
        \lstick{$\ket{0}$}  &  \gate{R_x(0.039) } & \gate[2]{R_{xx}(0.633)} & \gate[2]{R_{zx}(0.093)} & \gate[2]{R_{yy}(0.633)} & \gate[2]{R_{zx}(0.0039)} 
       \\ \lstick{$\ket{0}$}  & \gate{R_x(0.093)}  & \qw   & \qw  & \qw & \qw 
    \end{quantikz}
    } \caption{Circuit for node $1$.}
     \end{subfigure}

     \begin{subfigure}[b]{.5\textwidth}
    \centering
    \scalebox{.7}{
    \begin{quantikz}[thin lines] 
        \lstick{$\ket{0}$}  &  \gate{R_x(0.780) } & \gate[2]{R_{xx}(0.014)} & \gate[2]{R_{zx}(0.093)} & \gate[2]{R_{yy}(0.014)} & \gate[2]{R_{zx}(0.780} 
       \\ \lstick{$\ket{0}$}  & \gate{R_x(0.093)}  & \qw   & \qw  & \qw & \qw 
    \end{quantikz}
    } \caption{Circuit for node $2$.}
     \end{subfigure}

     \begin{subfigure}[b]{.5\textwidth}
    \centering
    \scalebox{.7}{
    \begin{quantikz}[thin lines] 
        \lstick{$\ket{0}$}  &  \gate{R_x(0.053) } & \gate[2]{R_{xx}(0.093)} & \gate[2]{R_{zx}(0.147)} & \gate[2]{R_{yy}(\minus 0.093)} & \gate[2]{R_{zx}(0.025} 
       \\ \lstick{$\ket{0}$}  & \gate{R_x(1.413)}  & \qw   & \qw  & \qw & \qw 
    \end{quantikz}
    } \caption{Circuit for node $3$.}
     \end{subfigure}
\caption{A simple graph with four nodes and respective circuits for operators representing each node are displayed. The simplicity of the edges between the nodes yields the same circuit architecture, and the different weights yield different parameters. The circuits were implemented and decomposed with Qiskit \cite{aleksandrowicz2019qiskit}, and the graph and adjacency matrices were calculated with NetworkX \cite{hagberg2008exploring}. }
\label{fig:node_circs}
\end{figure}

Making the description above more tangible, consider the toy graph in Figure \ref{fig:node_circs}. While extremely simple, with only four nodes, the special unitary operators created from the one hop adjacency matrix which represent each node are still fairly intricate; Figure \ref{fig:node_circs} displays the graph and operators representing each node. The simplicity of the graph is the reason why the gates are the same for the operator representations for each node. With larger graphs, such behavior should not be expected as the sparsity of the adjacency matrices will significantly vary. 

To compare the similarity of the node operator representations there are a few kernel methods to make this calculation. To ensure the algorithm is further amenable for direct implementation, the descriptions of the \textit{fidelity test} and \textit{SWAP test} are given; these two kernel methods, and general kernel methods, are described quite well in the Pennylane documentation \cite{bergholm2018pennylane}. The outline of the quantum circuits for both methods are displayed in Figure \ref{fig:inner-prod-circuits}. 

The fidelity test requires a small number of qubits, but requires a lot of gates in the circuit. The SWAP test requires far less gates, but twice the number of qubits as the fidelity test plus one more qubit for the ancillary register. Furthermore, the SWAP test only measures one qubit and fidelity test measures all qubits in the circuit. For implementation, it may be necessary to weigh the respective strengthens and weaknesses of each method.

 For the pseudo-algorithm we will note the fidelity test.  For brevity, this circuit denoted as $\mathrm{fidelity}(U,U')$ for two special unitary operators $U$ and $U'$.

\begin{figure}[!ht]
     \centering
     \begin{subfigure}[b]{.5\textwidth}
    \centering
    \scalebox{1.}{
        \begin{quantikz}[thin lines] 
           \lstick{$\ket{0}^{\otimes m}$} & \qwbundle[]{m} &  \gate{ \exp\{\minus i A^h_i\}} & \gate{ \exp\{\minus i A^h_j\}^{\dagger} } & \meter{}
        \end{quantikz}
    } \caption{\label{fig:fidelity} This is an illustration of the fidelity test.}
     \end{subfigure}
     \begin{subfigure}[b]{.5\textwidth}
    \centering
    \scalebox{1.}{
    \begin{quantikz}[thin lines] 
          \lstick{$\ket{0}$}             & \qw            & \gate{H}                        & \ctrl{1} & \qw & \gate{H} & \meter{} 
       \\ \lstick{$\ket{0}^{\otimes m}$} & \qwbundle[]{m} &  \gate{ \exp\{\minus i A^h_i\}} & \swap{1} & \qw & \qw      & \qw
       \\ \lstick{$\ket{0}^{\otimes m}$} & \qwbundle[]{m} & \gate{ \exp\{\minus i A^h_j\}}  & \swap{0} & \qw & \qw      & \qw 
    \end{quantikz}
    } \caption{This is a general illustration of the SWAP test.}
    \label{fig:swap}
    \end{subfigure}
\caption{The local adjacency matrices were calculated with the local topology around the nodes of $i$ and $j$ with $h$ hops, and the matrices are the same size. Each circuit is measured in the computational basis. }
\label{fig:inner-prod-circuits}
\end{figure}
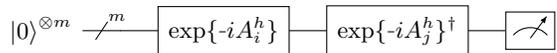
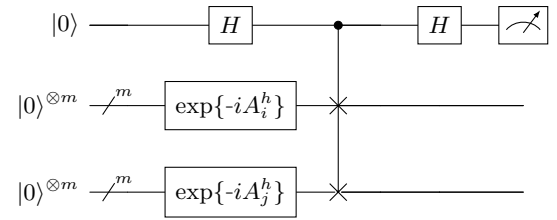

To implement the SWAP test, note with this calculation that if the states produced by the operators are orthogonal then the $0$-state is measured with a probability of $\frac{1}{2}$, and if the states produced by the operators are equivalent then the $0$-state is measured with a probability of $1$. Hence, a linear adjustment will map values to the unit interval.  

All of the necessary information has been discussed in order to describe the algorithm in detail, and the description is given in Algorithm \ref{alg:QuOp}. The illustration is very explicit and is intended for ease of implementation, and is not intended to be concrete or comprehensive. Therefore, the algorithm should be considered as a pseudo-algorithm, where individual implementations should be accordingly adjusted.  

\begin{algorithm}
\textbf{Input:} A graph $G$, set of nodes $\{ n_0,\ldots, n_i \}$, number of hops $h$, and Hermitian adjacency parameter $\alpha+i\beta$.
\begin{algorithmic}[1]
\State{$maxDim \gets 0$}
\For{$k \gets 0, i$}
    \State{$A^h_k \gets$ adjacency of hops $h$ from $G$}
    \State{$maxDim \gets \max\{ maxDim, \mathrm{dim}(A^h_k) \}$  }
    \State{$A^h_k \gets \mathrm{hermAdj}( A^h_k, \alpha )$ }
\EndFor
\State{$\mathrm{padsize} \gets 2^{ \lceil\log_2(maxDim) \rceil }$  \Comment{See Remark \ref{re:padding} }}
\For{$k \gets 0, i$}
    \State{$A^h_k \gets$ padding of zeros of size $(\mathrm{padsize}  \minus \mathrm{dim}(A^h_k))$ } 
    \State{$U_k \gets \exp\{ \minus i \cdot A^h_k \}$}
\EndFor
\State{$collectResults \gets$ array of zeros of size $(i+1) \times (i+1)$}
\For{$k \gets 0, i-1$}
    \For{$j \gets k+1, i$}
    \State{$collectResults[k][j] \gets \mathrm{fidelity}(U_k,U_j)$ }
    \EndFor 
\EndFor 
\State{\textbf{return} $collectResults$ }
\end{algorithmic}
\caption{QuOp Embedding}\label{alg:QuOp}
\end{algorithm}

The derivation of the algorithm opens the question of labeling nodes and the potential effect of how different labeling may have on the outcome. By Remark \ref{re:permutations}, for a node and its adjacency matrix from $h$ hops, different node labels change adjacency matrices only by a permutation matrix. Hence, while adjacency matrices from different labels create different circuits, for two nodes the respective gates should be the same and differ through a permutation on the wires in a circuit. Therefore, the expectations are the same. This logic is displayed in the claim below. The proof focuses on the fidelity test, however, the argument is easily extended to other algorithms where the expectation is an additional constant or multiple of the inner product. 

\begin{claim}
If adjacency matrices differ by a permutation matrix then the expectations in Algorithm \ref{alg:QuOp} are equivalent. 
\end{claim}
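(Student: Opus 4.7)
The plan is to exploit the fact that conjugation by a permutation commutes with the matrix exponential, and then to observe that the anchor-node convention preserves the reference state $\ket{0}^{\otimes m}$ used by the fidelity test. I focus on the fidelity test as the statement does; the SWAP test computes the same overlap up to an affine rescaling, so the extension will be immediate.

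First I would record the algebraic identity: for any Hermitian matrix $A$ and any permutation matrix $\mathcal{P}$,
\[
\exp\bigl(-i\,\mathcal{P}^{T} A\, \mathcal{P}\bigr) \;=\; \mathcal{P}^{T}\, \exp(-i A)\, \mathcal{P},
\]
which follows by expanding the exponential as a power series and using $(\mathcal{P}^{T} A \mathcal{P})^{n} = \mathcal{P}^{T} A^{n} \mathcal{P}$ together with $\mathcal{P}^{T}\mathcal{P} = I$. Writing the two labelings as $A^{h}_{k}$ and $\tilde{A}^{h}_{k} = \mathcal{P}^{T} A^{h}_{k} \mathcal{P}$ for $k \in \{i,j\}$ (the same $\mathcal{P}$ coming from a relabeling of the local neighborhood), this yields $\tilde{U}_{k} = \mathcal{P}^{T} U_{k} \mathcal{P}$ with $U_{k} = \exp(-i A^{h}_{k})$.

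Next I would substitute into the fidelity circuit of Figure \ref{fig:fidelity}. Its probability of observing the all-zeros string is $\bigl|\bra{0^{\otimes m}} U_{j}^{\dagger} U_{i} \ket{0^{\otimes m}}\bigr|^{2}$ in the original labeling, while after relabeling it becomes
\[
\bigl|\bra{0^{\otimes m}} \tilde{U}_{j}^{\dagger} \tilde{U}_{i} \ket{0^{\otimes m}}\bigr|^{2} \;=\; \bigl|\bra{0^{\otimes m}} \mathcal{P}^{T} U_{j}^{\dagger} U_{i} \mathcal{P} \ket{0^{\otimes m}}\bigr|^{2},
\]
because the middle factor $\mathcal{P}\mathcal{P}^{T}$ collapses to the identity.

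Finally I would invoke the convention built into the algorithm that the anchor node occupies a fixed row/column of its local adjacency matrix, with the zero-padding of Remark \ref{re:padding} only appending trailing indices. The admissible relabelings then shuffle non-anchor indices, the corresponding $\mathcal{P}$ fixes $\ket{0}^{\otimes m}$, and the two probabilities coincide. The main obstacle is precisely this last step, which is conceptual rather than computational: without the anchor-node convention, a generic permutation of the $2^{m}$ computational basis states would send $\ket{0}^{\otimes m}$ to some $\ket{k}$ and change the expectation to $\bigl|\bra{k} U_{j}^{\dagger} U_{i} \ket{k}\bigr|^{2}$. I would therefore make the anchor-node convention explicit at the outset, after which the remainder is pure algebra and extends verbatim to the SWAP test and to any kernel whose output is an affine function of $\bigl|\bra{\psi_{i}}\psi_{j}\rangle\bigr|^{2}$.
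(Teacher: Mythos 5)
Your algebra is correct and, up to the reduction of the relabeled expectation to $\bigl|\bra{0}^{\otimes m}\mathcal{P}^{T}U_{j}^{\dagger}U_{i}\,\mathcal{P}\ket{0}^{\otimes m}\bigr|^{2}$, it follows the same path as the paper: both use $e^{-i\mathcal{P}^{T}A\mathcal{P}}=\mathcal{P}^{T}e^{-iA}\mathcal{P}$ and the Born-rule expression for the fidelity test. The two arguments part ways at the removal of the conjugation by $\mathcal{P}$. The paper attempts this unconditionally, for an arbitrary relabeling: it inserts factors of $\mathcal{P}^{T}\mathcal{P}$, iterates the substitution, and splits on whether the order of $\mathcal{P}$ is odd or even, never invoking any relation between $\mathcal{P}$ and the reference state $\ket{0}^{\otimes m}$. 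You instead add an explicit hypothesis --- the anchor node occupies the first row/column (so it corresponds to $\ket{0}^{\otimes m}$), padding appends only trailing indices, and admissible relabelings permute the remaining indices --- under which $\mathcal{P}\ket{0}^{\otimes m}=\ket{0}^{\otimes m}$ and the conclusion is a one-line computation. This buys an airtight finish at the cost of proving a conditional statement; and your side remark that a generic permutation turns the expectation into $\bigl|\bra{k}U_{j}^{\dagger}U_{i}\ket{k}\bigr|^{2}$ pinpoints precisely the delicate point in the paper's route: the fidelity is evaluated against the fixed state $\ket{0}^{\otimes m}$, which a general $\mathcal{P}$ does not fix, so some condition of your kind (or an argument that genuinely eliminates the dependence on $\mathcal{P}\ket{0}^{\otimes m}$) is what carries the burden, and the paper's even/odd-order bookkeeping does not transparently supply it. One precision you should make when writing this up: ``anchor at a fixed row/column'' must mean the \emph{first} row/column specifically, since a permutation fixing some other anchor index need not fix the basis vector corresponding to $\ket{0}^{\otimes m}$.
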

\begin{proof}
Take two arbitrary nodes, $k$ and $l$, take two arbitrary node labeling and respective adjacency matrices, $A$ and $\Tilde{A}$. Then for $h$ hops, there exists a permutation matrix $P$ such that $A^h_k = P^T \Tilde{A}^h_k P$ and $A^h_l = P^T \Tilde{A}^h_l P$. Recall that permutation matrices are real valued and orthogonal, hence $P^{-1} = P^{T}$, $P$ preserves the inner product, and $P$ has a finite order.  

Following Schuld and Petruccione \cite{schuld2021quantum}, for the fidelity test and observable $\mathcal{M}$ we see that 
\begin{equation*}
    \begin{split}
     & \bra{0}^{\otimes n} (e^{-i\cdot A^h_k})^{\dagger} e^{-i\cdot A^h_l} \mathcal{M} (e^{-i\cdot A^h_l})^{\dagger} e^{-i\cdot A^h_k} \ket{0}^{\otimes n} \\
     = & \bra{0}^{\otimes n} (e^{-i\cdot A^h_k})^{\dagger} e^{-i\cdot A^h_l} \ket{0}^{\otimes n}\bra{0}^{\otimes n} (e^{-i\cdot A^h_l})^{\dagger} e^{-i\cdot A^h_k} \ket{0}^{\otimes n} \\
     =& |\bra{0}^{\otimes n} (e^{-i\cdot A^h_l})^{\dagger} e^{-i\cdot A^h_k} \ket{0}^{\otimes n}|^2 \\
     =& |\bra{0}^{\otimes n} (e^{-i\cdot P^T \Tilde{A}^h_l P})^{\dagger} e^{-i\cdot P^T \Tilde{A}^h_k P} \ket{0}^{\otimes n}|^2 \\
     =& |\bra{0}^{\otimes n} (P^T e^{-i\cdot \Tilde{A}^h_l } P)^{\dagger} P^T  e^{-i\cdot \Tilde{A}^h_k}  P \ket{0}^{\otimes n}|^2 \\
     =& |\bra{0}^{\otimes n} P^T (e^{-i\cdot \Tilde{A}^h_l } )^{\dagger} e^{-i\cdot \Tilde{A}^h_k} P \ket{0}^{\otimes n}|^2 . 
    \end{split}
\end{equation*}
Now observe that 
\begin{equation*}
    \begin{split}
     & |\bra{0}^{\otimes n} (e^{-i\cdot \Tilde{A}^h_l})^{\dagger} e^{-i\cdot \Tilde{A}^h_k} \ket{0}^{\otimes n}|^2 \\
     =& |\bra{0}^{\otimes n}P^T P(e^{-i\cdot \Tilde{A}^h_l})^{\dagger} P^T P e^{-i\cdot \Tilde{A}^h_k}P^T P \ket{0}^{\otimes n}|^2 \\
     =& |\bra{0}^{\otimes n}P^T (e^{-i\cdot A^h_l} )^{\dagger}  e^{-i\cdot A^h_k}P\ket{0}^{\otimes n}|^2 \\
    \end{split}
\end{equation*}
Thus, if $P$ is of an odd order then repeated substitution equalities yield 
\begin{equation*}
    \begin{split}
    & |\bra{0}^{\otimes n} (e^{-i\cdot A^h_l})^{\dagger} e^{-i\cdot A^h_k} \ket{0}^{\otimes n}|^2 \\
    & = |\bra{0}^{\otimes n} (e^{-i\cdot \Tilde{A}^h_l})^{\dagger} e^{-i\cdot \Tilde{A}^h_k} \ket{0}^{\otimes n}|^2.
    \end{split}
\end{equation*}

If $P$ has an even order, then through repeated substitutions we can derive another permutation matrix. Ergo, we can find adjacency matrices $_1 A^h_j, \ldots, _d A^h_j$, where $j= l,k$, and respective permutation matrices $P_1,\ldots, P_d$ that gives the equality

\begin{equation*}
    \begin{split}
    & |\bra{0}^{\otimes n} (e^{-i\cdot A^h_l})^{\dagger} e^{-i\cdot A^h_k} \ket{0}^{\otimes n}|^2 \\
    & = |\bra{0}^{\otimes n} (P P_1 \ldots P_d)^T (e^{-i\cdot \Tilde{A}^h_l})^{\dagger} e^{-i\cdot \Tilde{A}^h_k} P P_1 \ldots P_d \ket{0}^{\otimes n}|^2.
    \end{split}
\end{equation*}
Following the same logic as above yields the proof. 


\end{proof}

\subsection{Comparison with Other Methods}\label{subsec:compare}
The concept of representing a graph, or subgraph, with a Hamiltonian is quite natural, as the nodes themselves are the singular terms and edge between of nodes generates the quadratic terms \cite{calude2017qubo,cong2019quantum,verdon2019quantum,henry2021quantum,9978396,albrecht2023quantum,skolik2023equivariant}. For an illuminating example, determination of the existence of an isomorphisms between graphs are mapped to a quadratic unconstrained binary optimization (QUBO) problem \cite{calude2017qubo}, where the topology of the graphs are mapped as quadratic factors, and the lowest energy indicates how close in structure are the respective graphs. In fact, QUBOs have an exact mapping to an Ising Hamiltonian~\cite{farhi2000quantum}.  

There is a rich literature on encoding a graph into a quantum circuit, all of which have respective ansatz layers for a tunable Hamiltonian \cite{cong2019quantum,verdon2019quantum,henry2021quantum,9978396,albrecht2023quantum,skolik2023equivariant}, and all of which are designed for a downstream task in a variational algorithm. Henry et al. \cite{henry2021quantum} and Albrecht et al. \cite{albrecht2023quantum} designed an encoding map for graphs with both leveraging the Ising Hamiltonian and the $XY$ Hamiltonian. Interestingly, both authors consider a classification task and, since the neutral atom processors were used for the experiments, the Hamiltonian evolved over time. Repeated application of the layers approximates this evolution. 

With the consideration of time evolution, Skolik et al. ~\cite{skolik2023equivariant} derive a rigorous equivariant method against the node permutations which, as noted in the paper, is a special case of the quantum approximate optimization algorithm (QAOA). The ansatz is tailored to the specific task. 

While the QuOp algorithm was motivated from the Hamiltonian of continuous walks, this is where the similarities with the cited graph encoding methods end. For instance, the Hermitian operators are predefined for all of the previous mentioned techniques, and for QuOp the adjacency matrices generate the quantum operators and the respective sequence of gates. For the generated quantum gates, the Cartan decomposition may decompose the node embedding into a well-known Hamiltonian. Moreover, the QuOp algorithm is variational free, where the other techniques have variational layers trained on specific classification tasks. Finally, in Section \ref{sec:lie}, we describe the representation of the technique with the natural Lie algebras by leveraging the local topologies around each node.       

\subsection{Experiments}

\begin{table}
    \centering 
    \footnotesize
    \caption{Pairwise similarity in random graph. QuOp and FastRP scores for arbitrary node pairs.}
    \begin{tikzpicture}
    \node (table) [inner sep=1pt] {
    \renewcommand{\arraystretch}{1.1}%
    \setlength{\tabcolsep}{4pt}

    \begin{tabularx}{.97\linewidth}{llXX}
    &&\multicolumn{2}{c}{\bfseries{Number of Nodes}}\\
    \bfseries{\underline{Graph Node Size}} & \bfseries{\underline{Node Pairs}} & \bfseries{\underline{FastRP}} & \bfseries{\underline{QuOp}} \\
    \multirow{4}{*}{32} 
    & (4,8)   &  .466    & .766 \\
    & (19,20) &  .203    & .827 \\
    & (7,29)  &  .353    & .157 \\
    & (9,30)  &  .188    & .207 \\
    \hline\multirow{4}{*}{64}
    & (19,22) &  .392    & .855 \\
    & (9,30)  &  -.071   & .816 \\
    & (41,55) &  .490    & .038 \\
    & (10,11) &  .390    & .141 \\
    
    \end{tabularx}
    
    };
    \draw [rounded corners=.5em] (table.north west) rectangle (table.south east);
    \end{tikzpicture}
    \label{table:info-distance-fastrp}
    
\end{table}

Classical graph node embeddings play a pivotal role in network analysis, offering a means to represent complex graph structures in a lower-dimensional space while preserving important topological information. One approach to achieve this is through fast random projection (FastRP) methods, where nodes are projected onto a lower-dimensional space using random linear transformations~\cite{chen2019fast}. These methods capture structural characteristics of the graph by mapping nodes in such a way that preserves their pairwise relationships. We use a well-known, tried and tested graph database tool from Neo4J~\cite{neo4jfastrp} to store, retrieve, and run graph data science algorithms on the graph. In this case we use their FastRP implementation of random projections. 

Fast random projection methods, such as those leveraging techniques like random projection matrices, offer computational efficiency and scalability. These methods balance preserving graph topology and reducing the dimensionality of the node representations. The resulting embeddings facilitate downstream tasks like link prediction, community detection, and classification, as they capture the essential structural patterns within a graph.

Moving forward in the evolution of classical graph embeddings, we encounter the influential Node2Vec algorithm~\cite{grover2016node2vec}. Node2Vec is a state-of-the-art technique beyond mere structural similarity, incorporating notions of homophily and structural equivalence. By employing a biased random walk strategy, Node2Vec explores local and global neighborhoods, allowing nodes with similar roles or functions to be embedded closely in the vector space. Given the computational shortcomings of implementing QuOp, as discussed below, we will compare QuOp against GloVe~\cite{pennington2014glove}, which lies under this class of algorithms.

To test the efficacy of QuOp, we compare the performance of the method against FastRP applied to two randomly generated weighted graphs, as well as the well-used Karate Club graph~\cite{zachary1977information, chintalapudi2015survey}. Similar to QuOp, FastRP also works by taking into account neighbors one-hop away then iterates over the set of the neighbors' features to generate node embeddings of a pre-selected size. 

In each experiment in the subsequent subsections, we extract an adjacency matrix for each node by deriving the immediate neighborhood for each node, i.e. a subgraph of the nodes one-hop away, and apply QuOp as described by Algorithm~\ref{alg:QuOp}. We visualize results using a triangular heat-map, since the upper triangle and lower triangle display the same information.

\onecolumngrid

\begin{figure}[!ht]
    \centering
    \begin{tabularx}{\linewidth}{XXXX}
    \begin{subfigure}[b]{.8\linewidth}
    \centering     \includegraphics[height=4.5cm]{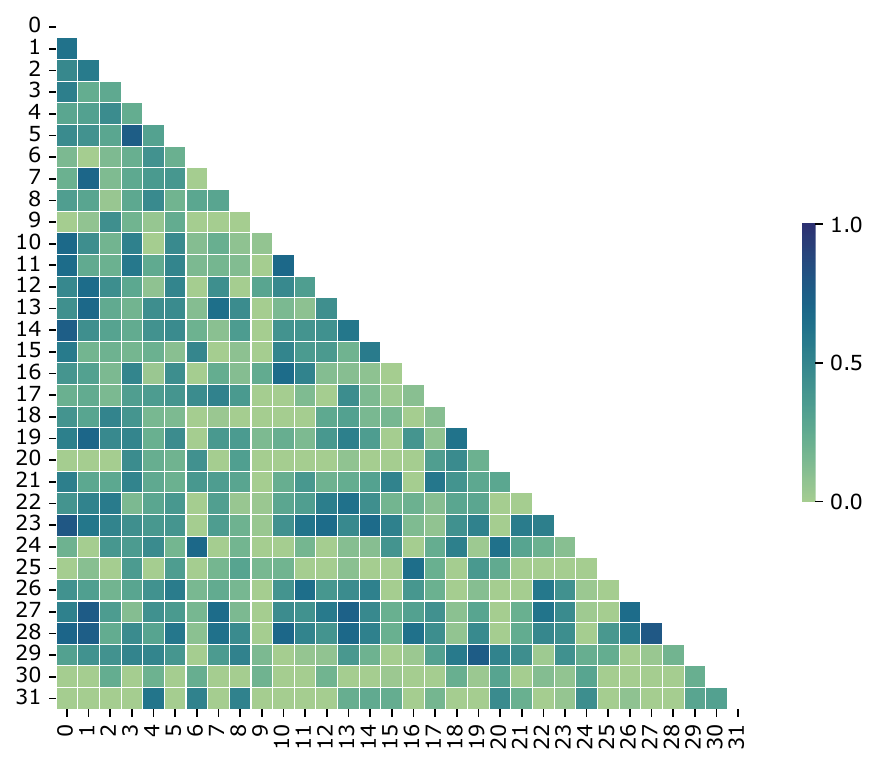}
    \caption{FastRP heat map on the 32 node graph.}
    \end{subfigure} & 
    
    \begin{subfigure}[b]{.8\linewidth}
    \centering\includegraphics[height=4.5cm]{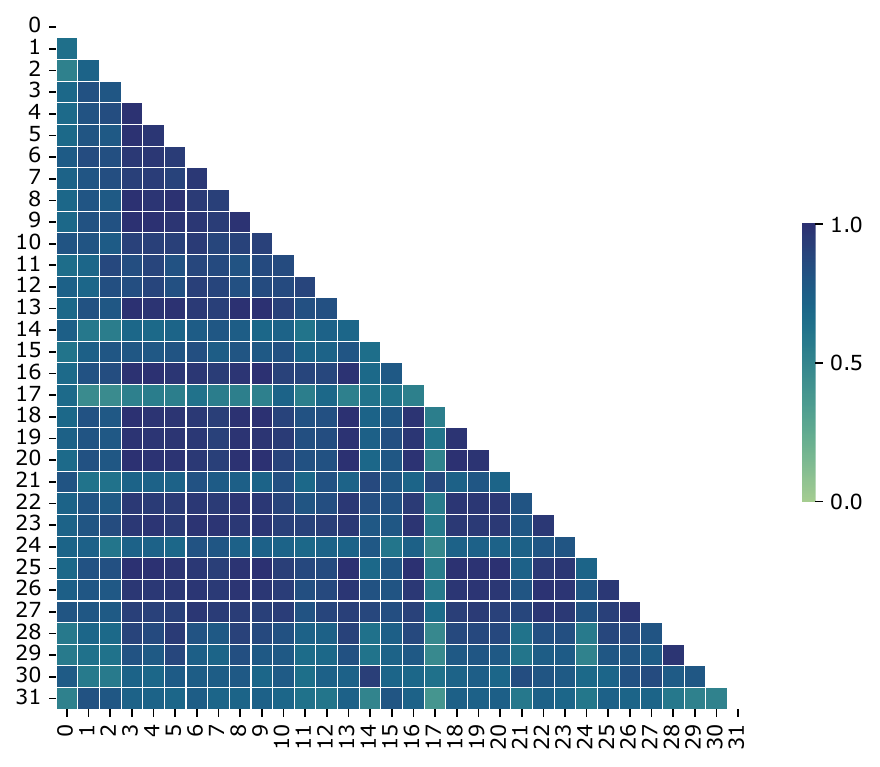}
    \caption{QuOp heat map on the 32 node graph.}
    \end{subfigure}& 
    
    \begin{subfigure}[b]{.8\linewidth}
    \centering     \includegraphics[height=4.5cm]{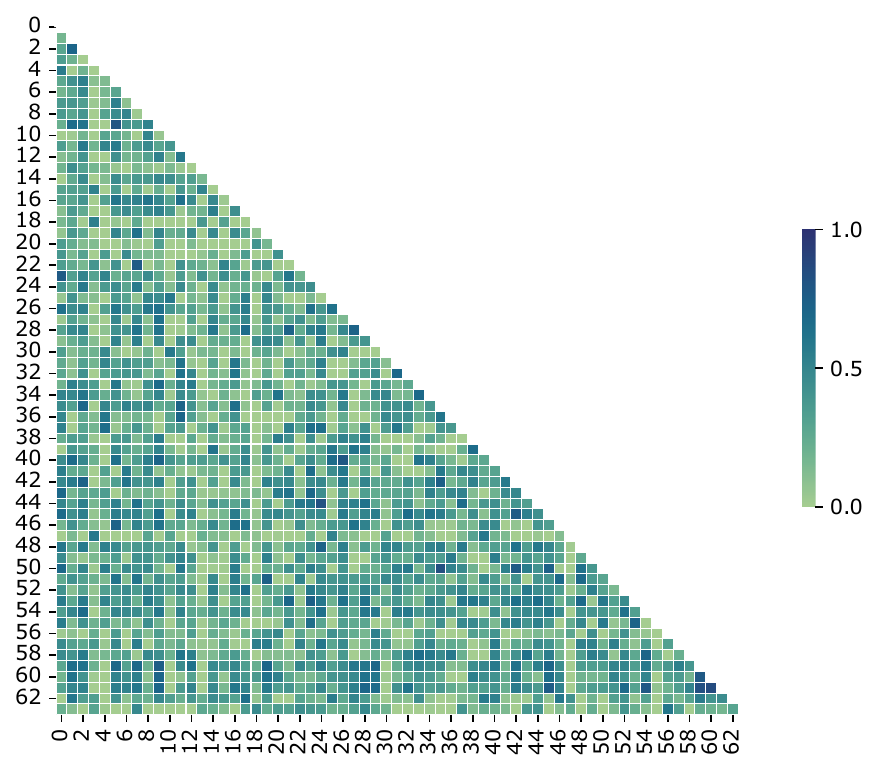}
    \caption{FastRP heat map on the 64 node graph.}
    \end{subfigure}& 
    \begin{subfigure}[b]{.8\linewidth}
    \centering     \includegraphics[height=4.5cm]{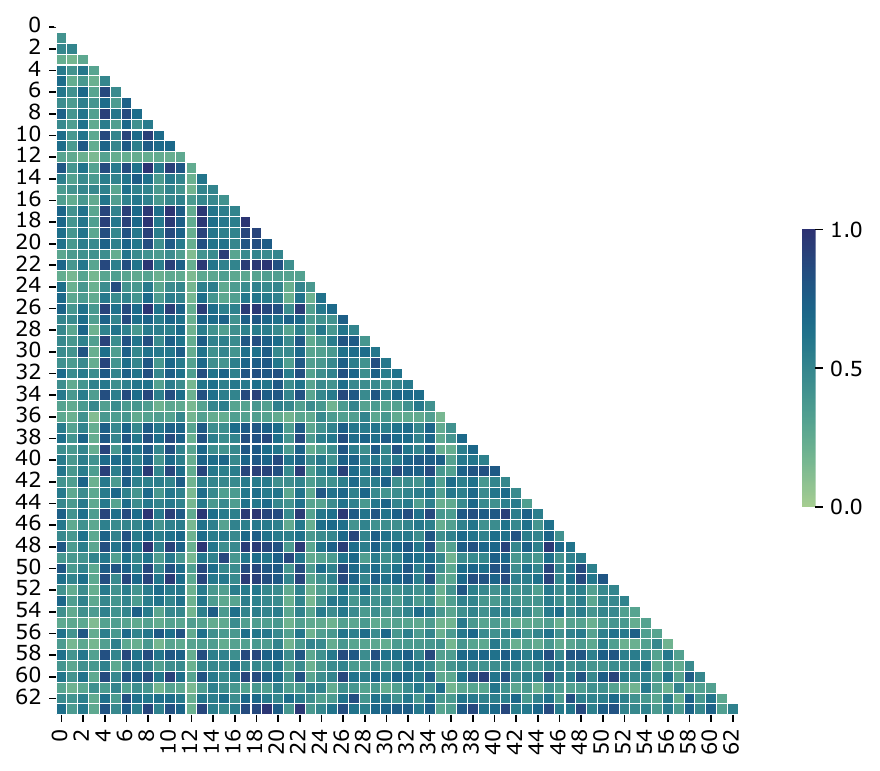}
    \caption{QuOp heat map on the 64 node graph.}
    \end{subfigure}
    \end{tabularx}
    \caption{For the random graphs with 32 nodes and 64 nodes, the heat maps of the node-pairwise similarity scores from FastRP and QuOp.}
    \label{fig:randon-graph-heat-maps}
\end{figure}

\twocolumngrid

In evaluating node embeddings, exploring the concept of node similarity heat maps is crucial. These heat maps visualize the pairwise similarities between nodes in the graph based on their embedding representations. Heat maps provide an intuitive and informative visualization of how closely related or distant nodes are in the embedded space. High similarity values between nodes are depicted with warmer colors, while cooler colors indicate lower similarities.

\begin{figure}[!ht]
    \centering
    \renewcommand{\arraystretch}{1.1}%
    \begin{tabularx}{\linewidth}{lX}
    \includegraphics[height=3.5cm]{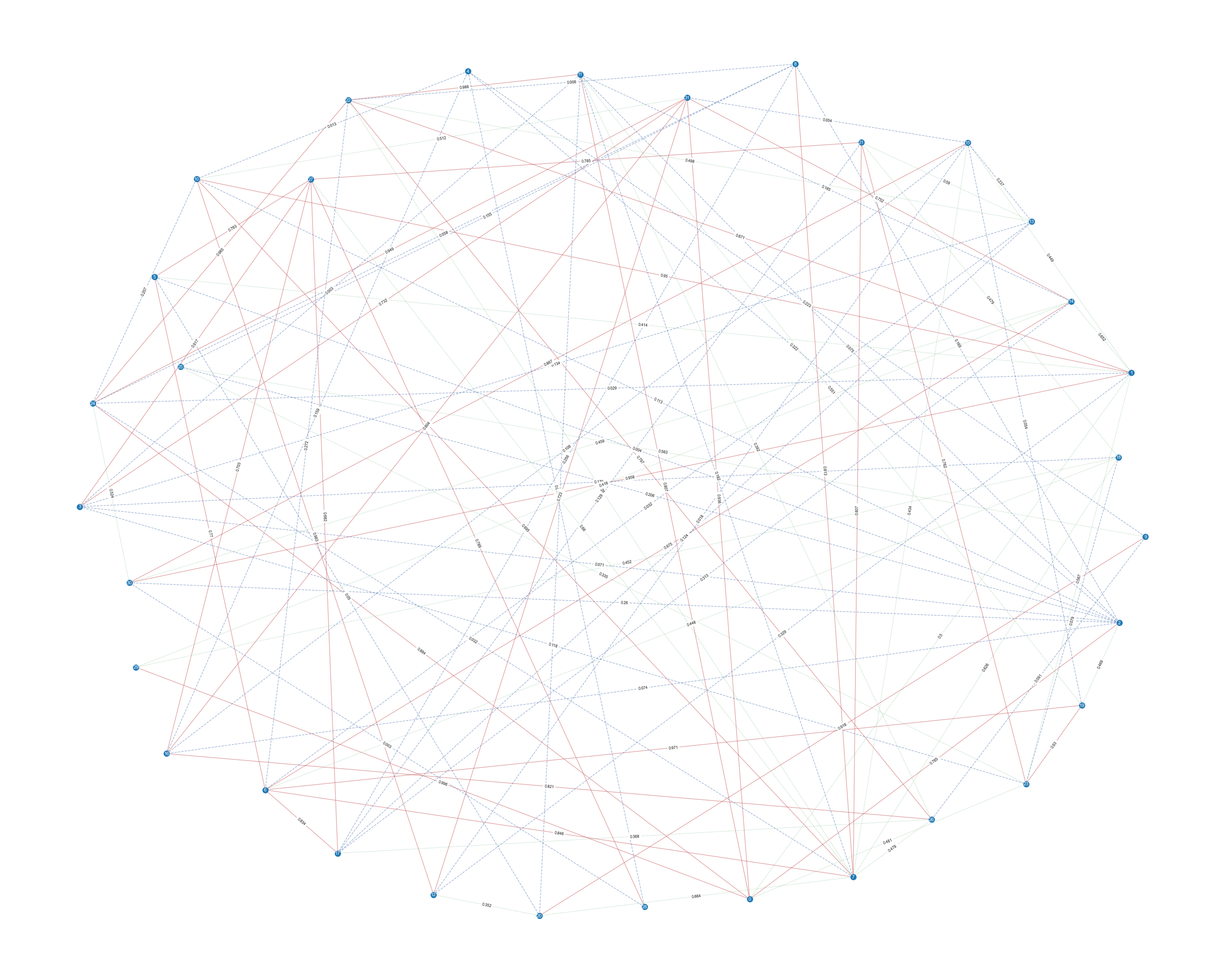} 
    & \includegraphics[height=3.5cm]{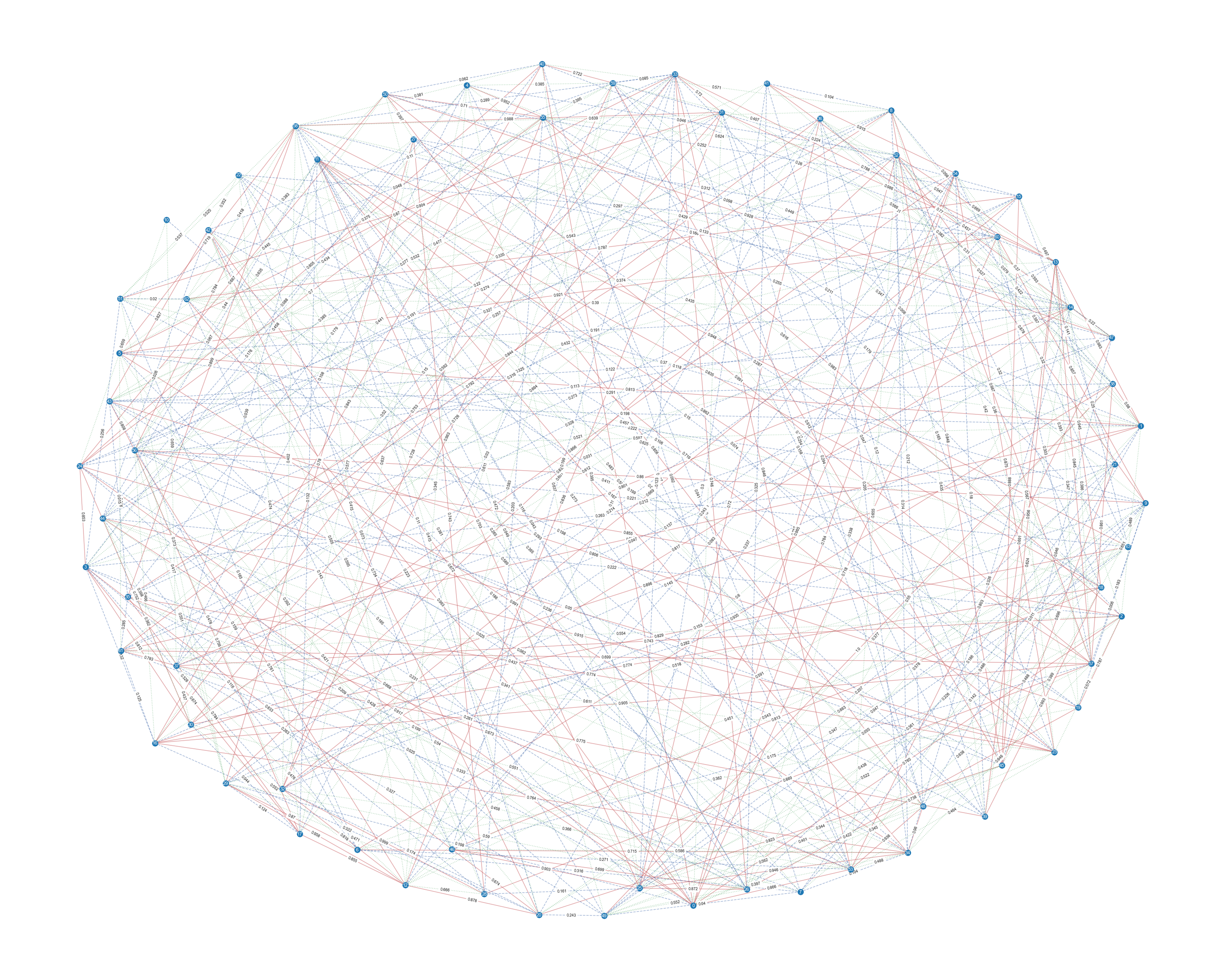}\\
    \multicolumn{1}{c}{(a) Visual of $32$ node graph} 
    & \multicolumn{1}{c}{(e) Visual of $64$ node graph }\\
    \includegraphics[height=3.5cm]{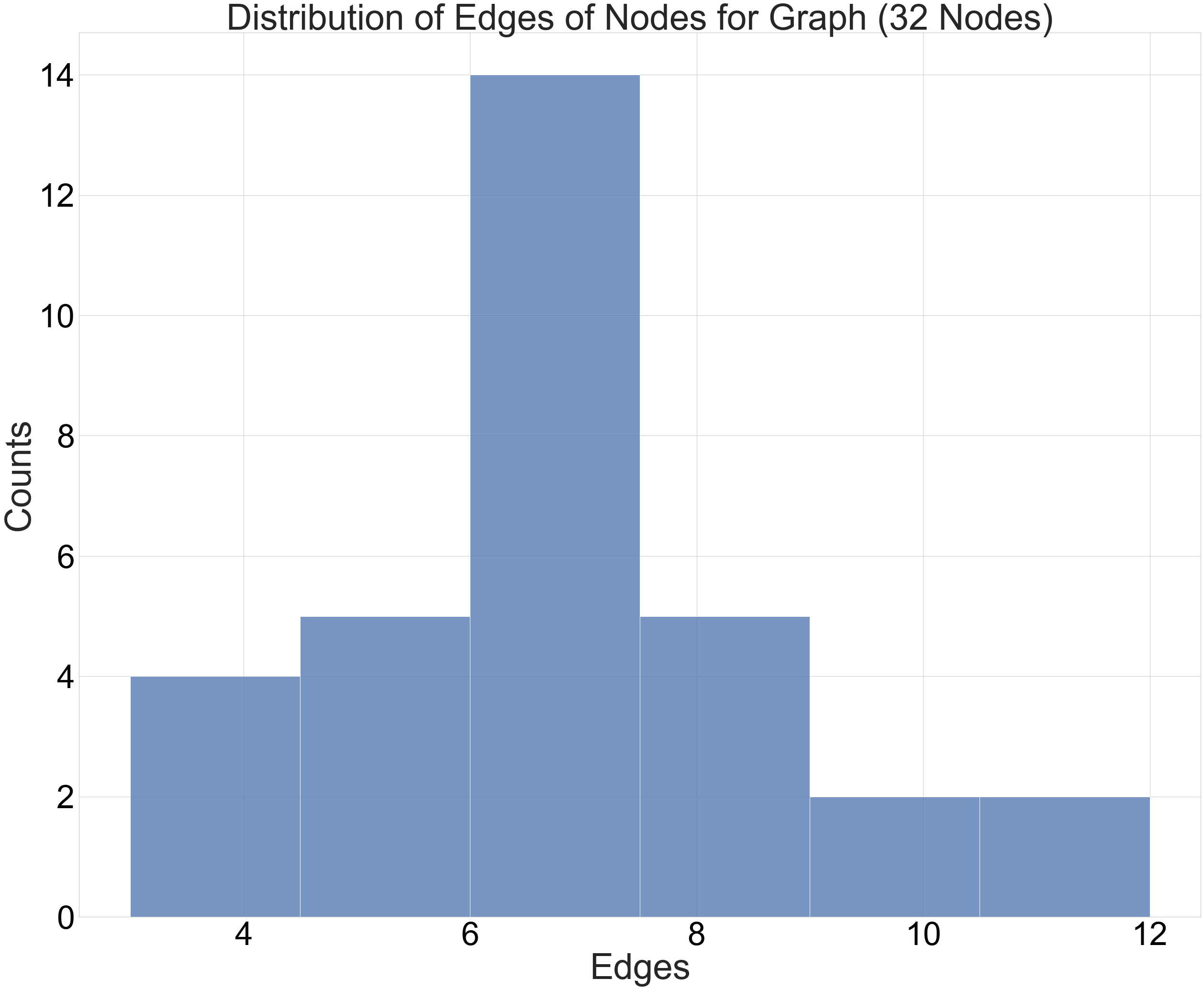} 
    & \includegraphics[height=3.5cm]{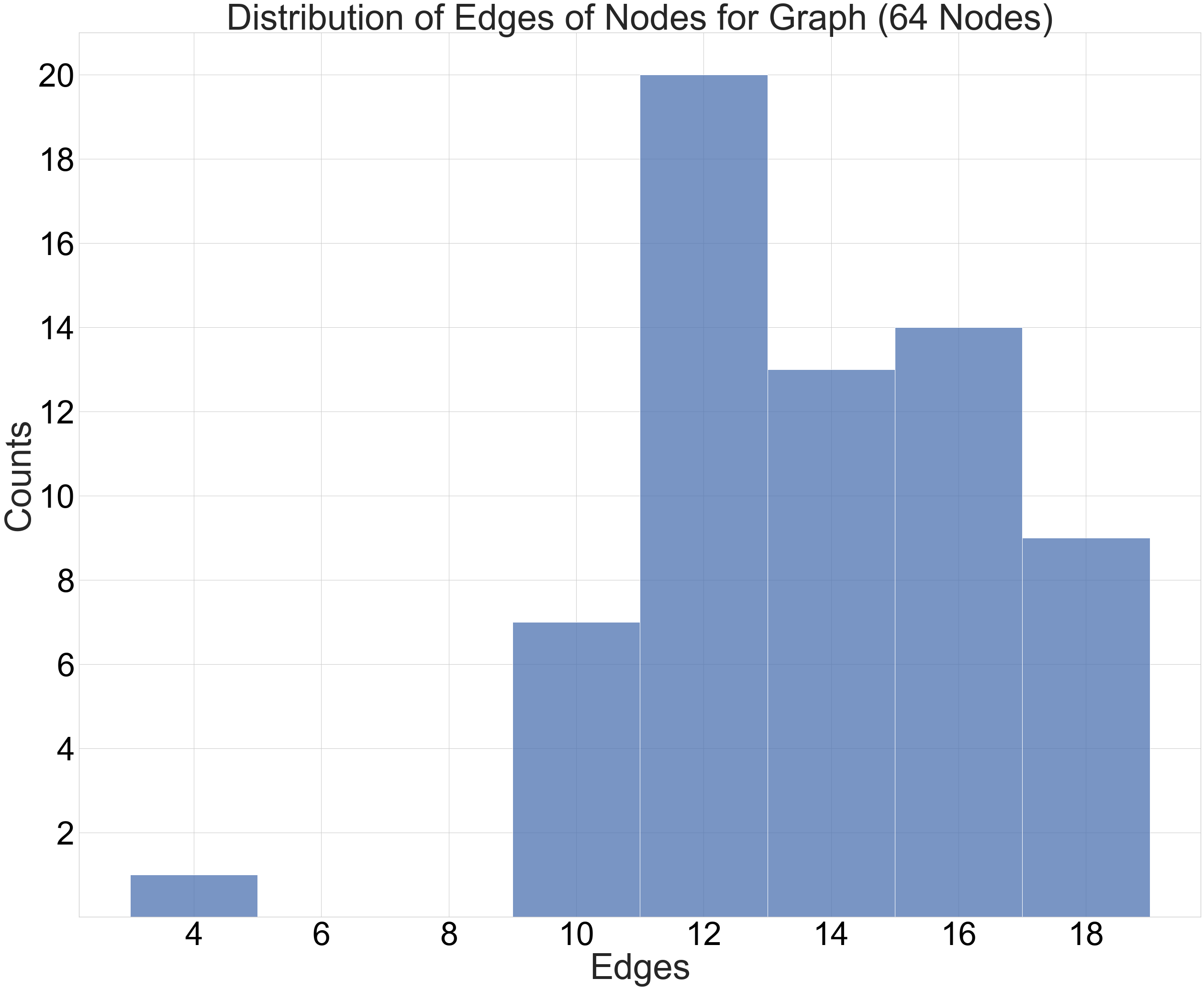}\\
    \multicolumn{1}{c}{(b) Edge count per node.} 
    & \multicolumn{1}{c}{(f) Edge count per node.}\\ 
    \includegraphics[height=3.5cm]{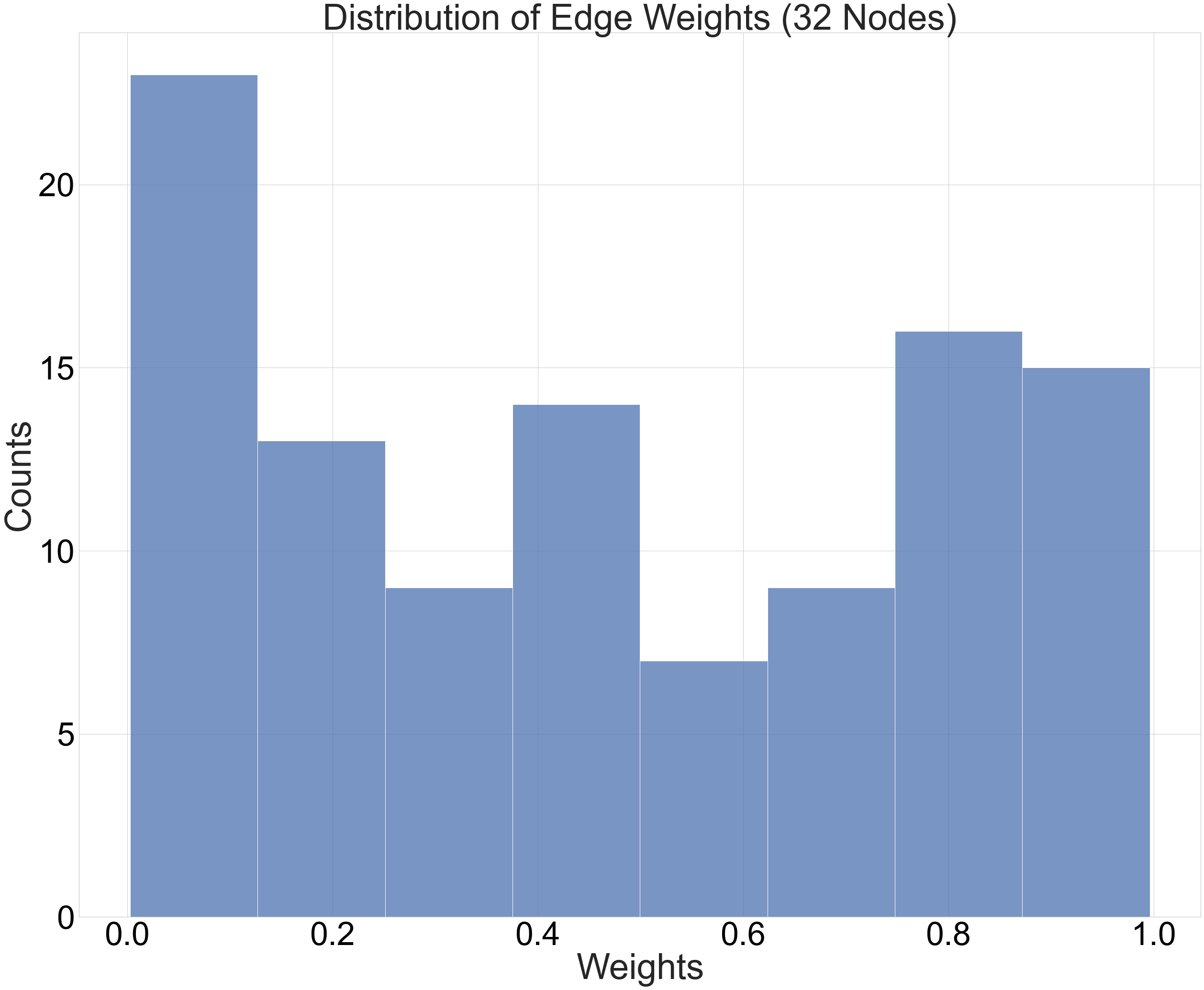} 
    & \includegraphics[height=3.5cm]{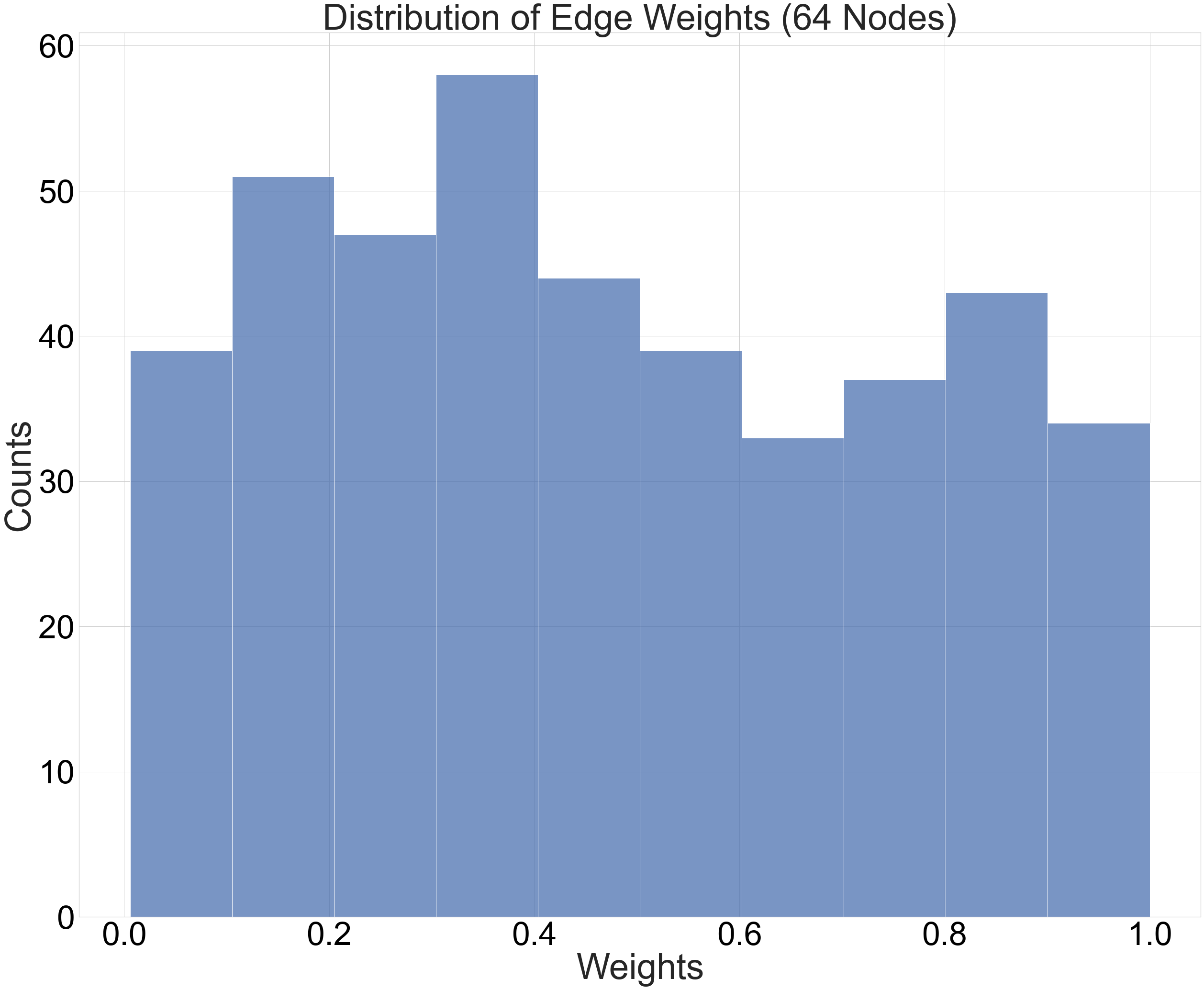}\\
    \multicolumn{1}{c}{(c) Weights on the edges.} 
    & \multicolumn{1}{c}{(g) Weights on the edges.}\\ 
    \end{tabularx}
    \caption{Visualization and distributions of the edges and weights of the random graph with 32 nodes, and random graph with 64 nodes. The column on the left hand side is the 32 node graph, and the column on the right hand side is the 64 node graph.}  
    \label{fig:graph-32-64-visual}
\end{figure}

For the experiments on the Karate Club graph and random graphs, node embeddings using FastRP are generated using an open-source graph database engine from Neo4J, an instance of Neo4J's Community Edition; the enterprise edition also suffices. The Karate Club graph is exported from NetworkX to a file. The graph is then imported into Neo4J. A graph projection for the whole graph is selected and apply the FastRP node embeddings operation on the graph. For ease of use and repeatability, we use Scikit-Learn~\cite{scikit} library implementation of cosine similarity, a normalized dot product of the vectors for nodes $(v_i,v_j)$, to compute similarity scores between pairs.

To further display the potential of QuOp of capturing latent information, we compare QuOp against pre-trained GloVe embeddings \cite{pennington2014glove} with randomly chosen words and the respective corpus. GloVe was selected since it is well-known, a method similar to Node2Vec, and the pre-trained vectors can be adjusted to relative information for a fair comparison, as a full subgraph of one-hop of a word would create too large of an adjacency matrix to run on simulated qubits. Given the intricacy of the process, the full description is in Section \ref{subsec:glove}. 

\subsubsection{Randomly Generated Weighted Graphs}\label{subsec:rando}

To test the initial performance of the ability of QuOp to measure the information distance between nodes in a graph, we start with two graph graphs, one with 32 nodes and the other with 64 nodes, generated at random using an Erdos-Renyi function in the NetworkX Python package \cite{hagberg2008exploring}, and the weights for the edges were uniformly generated with values residing in the unit interval, $(0,1)$; see Figure~\ref{fig:graph-32-64-visual} for an overview. The purpose of the random graphs was to create simple enough graphs which are able to be ran in a simulated quantum environment, while still displaying complexity that would require deeper methods to derive node similarity. 

For the $32$ node graph, a one-hop adjacency matrix was calculated for each node and the QuOp algorithm was ran on simulated qubits with the IBM Qiskit package \cite{aleksandrowicz2019qiskit}, applying $1000$ shots per inner product. For the $64$ node graph, a one-hop adjacency matrix was calculated for each node with $1500$ shots per inner product.

\begin{figure}[!ht]
    \centering
    \renewcommand{\arraystretch}{1.1}%
    \begin{tabularx}{\linewidth}{lX}
    \includegraphics[height=3cm]{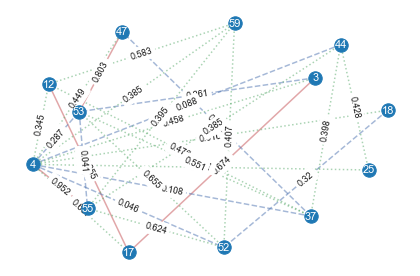} 
    & \includegraphics[height=3cm]{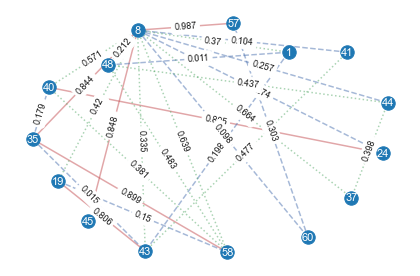}\\
    \multicolumn{1}{c}{(a) Subgraph of node $4$} 
    & \multicolumn{1}{c}{(b) Subgraph of node $8$ }\\
    \includegraphics[height=3cm]{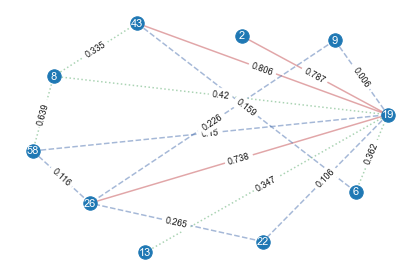} 
    & \includegraphics[height=3cm]{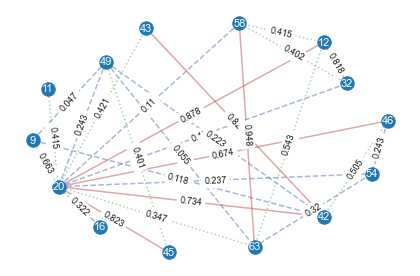}\\
    \multicolumn{1}{c}{(c) Subgraph of node $19$} 
    & \multicolumn{1}{c}{(d) Subgraph of node $20$ }\\ 
    \includegraphics[height=3cm]{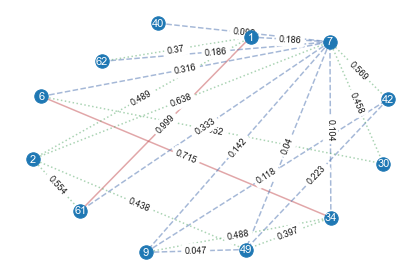} 
    & \includegraphics[height=3cm]{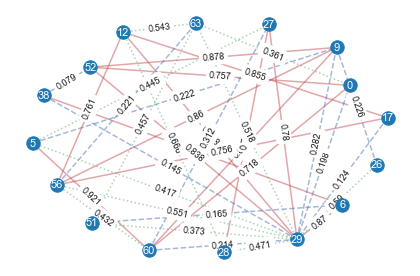}\\
    \multicolumn{1}{c}{(e) Subgraph of node $7$} 
    & \multicolumn{1}{c}{(f) Subgraph of node $29$ }\\ 
    \includegraphics[height=3cm]{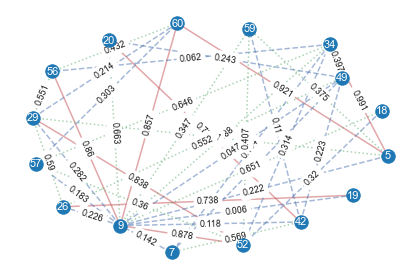} 
    & \includegraphics[height=3cm]{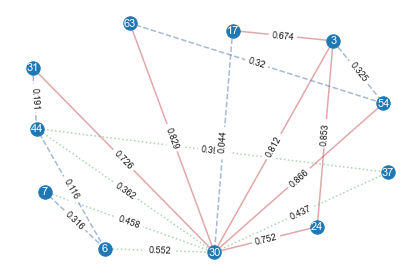}\\
    \multicolumn{1}{c}{(g) Subgraph of node $9$} 
    & \multicolumn{1}{c}{(h) Subgraph of node $30$ }\\ 
    \end{tabularx}
    \caption{Subgraphs of the graph with 32 nodes. The similar nodes are the pairs $(4,8)$ and $(19,20)$, and the dissimilar pairs of nodes are $(7,29)$ and $(9,30)$. }  
    \label{fig:graph-32-nodes-compare}
\end{figure}

\begin{figure}[!ht]
    \centering
    \renewcommand{\arraystretch}{1.1}%
    \begin{tabularx}{\linewidth}{lX}
    \includegraphics[height=3cm]{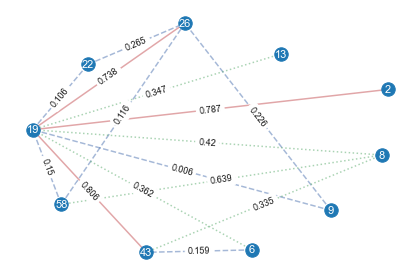} 
    & \includegraphics[height=3cm]{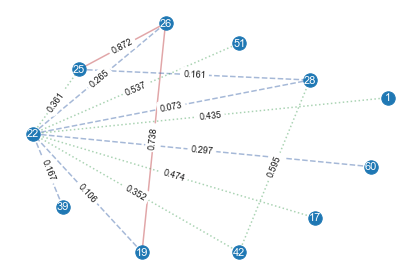}\\
    \multicolumn{1}{c}{(a) Subgraph of node $19$} 
    & \multicolumn{1}{c}{(b) Subgraph of node $22$ }\\
    \includegraphics[height=3cm]{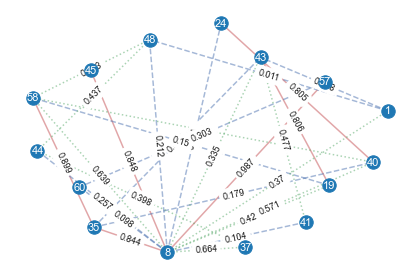} 
    & \includegraphics[height=3cm]{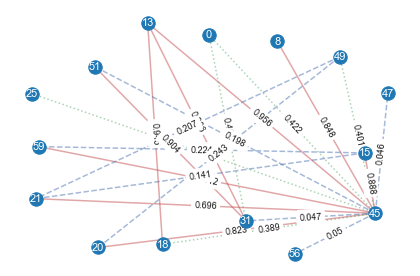}\\
    \multicolumn{1}{c}{(c) Subgraph of node $8$} 
    & \multicolumn{1}{c}{(d) Subgraph of node $45$ }\\ 
    \includegraphics[height=3cm]{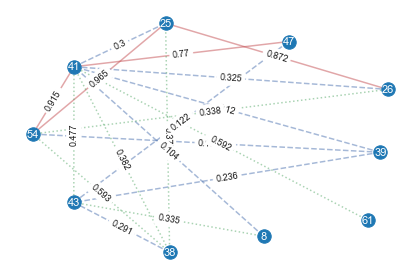} 
    & \includegraphics[height=3cm]{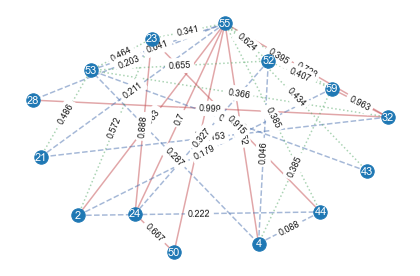}\\
    \multicolumn{1}{c}{(e) Subgraph of node $41$} 
    & \multicolumn{1}{c}{(f) Subgraph of node $55$ }\\ 
    \includegraphics[height=3cm]{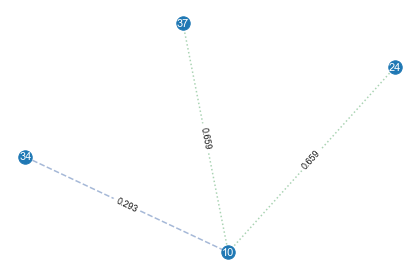} 
    & \includegraphics[height=3cm]{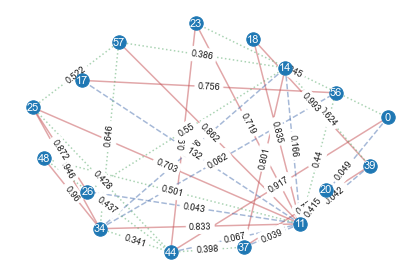}\\
    \multicolumn{1}{c}{(g) Subgraph of node $10$} 
    & \multicolumn{1}{c}{(h) Subgraph of node $11$ }\\ 
    \end{tabularx}
    \caption{Subgraphs of the graph with 64 nodes. The similar nodes are the pairs $(19,22)$ and $(8,45)$, and the dissimilar pairs of nodes are $(41,55)$ and $(10,11)$. }  
    \label{fig:graph-64-nodes-compare}
\end{figure}

For FastRP, we leveraged the Neo4j graph database to store, query, and compute these embeddings at the node-level; for the experiment, embeddings of $16$ dimensions was selected. 

Figure \ref{fig:randon-graph-heat-maps} displays the heat maps for QuOp and FastRP. The heat maps for both graphs display that FastRP does not identify as many similar node pairs that what was identified with QuOp.  However, as information is aggregated in the heat maps, pairs of nodes displaying a similar topology and dissimilar topology were picked to show discrepancies between the scores. The results of this comparison are in Table~\ref{table:info-distance-fastrp}, and a visual inspection of the node pairs are given in Figures \ref{fig:graph-32-nodes-compare} and \ref{fig:graph-64-nodes-compare}. One may observe that Table~\ref{table:info-distance-fastrp} shows that for both graphs QuOp produces consistent results, while FastRP lacks consistency. Hence, QuOp is able to capture the latent information contained within the topology of a node in a better fashion than FastRP.   
 
 To address the scaling, FastRP implemented by Neo4J is fast and scalable and is one widely used~\cite{chen2019fast}. However, as the size of the random graph grows, the similarity between a fixed node pair decreases; this decrease is due to sparseness as the number of nodes and edges increases. On the contrary, QuOp is able to handle the latent information within the topology of a node as the size of the graph increase. The current implementation does not scale well, as we observed limitations when running on conventional hardware. Particularly, a random graph of order 128 nodes was derived, and node comparison of one-hop topologies failed to converge. This limitation may be explored in the future using additional hardware resources, such as leveraging graphic processing units (GPUs).


\subsubsection{Karate Club Graph} \label{subsec:k-c}

 For a real-world weighted and undirected graph, we selected the Karate Club graph, a university-based Karate Club that separates in two factions or two organizations. The purpose is to examine how QuOp handles the topology of a real-world toy-graph. The graph has node attributes that pair each node to one of two club groups, either ``Mr. Hi" or ``Officer". 
 
 We used NetworkX to generate the Karate Club graph. The node pairs come pre-weighted, and the edges were re-weighted the by dividing by the maximum weight, which is $6$, and multiplying by $\pi$. This adjusted graph is then processed through NEO4J to run FastRP. 

Results for QuOp and FastRP on Karate Club graphs, in Table~\ref{table:karateclub}, show a consistent trend in similarity scores for each node. While scales differ from the first to the second pair, the score trends downwards, with lower similarity. From the second to the third pair, the similarity trends downward. Scores for the first and fifth do differ significantly. With QuOp, pair $(9,18)$ is more similar in contrast to the score obtained from FastRP, and the same is true for the first pair. A closer examination of the topology might shed light on this difference. Pairs $(9,18)$ share the same high degree neighboring node, which signals relative similarity. 

On the other hand, the score for pair $(11,32)$ is high for both methods when the topology suggests otherwise. Node 11 has a degree of 1, whereas 32 has a degree above 10. Each node's one-hop neighbors are different, which would imply lower similarity. 

\begin{table}[!ht]
    \centering 
    \footnotesize
    \caption{Pairwise similarity across a selection of node pairs in Karate Club graph}
    \begin{tikzpicture}
    \node (table) [inner sep=1pt] {
    \renewcommand{\arraystretch}{1.1}%
    \setlength{\tabcolsep}{4pt}

    \begin{tabularx}{.97\linewidth}{lXX}
    \bfseries{\underline{Node Pairs}} & \bfseries{\underline{FastRP}} & \bfseries{\underline{QuOp}} \\

    (32,31) &  .999     & .318 \\
    (0,32)  &  .410     & .210 \\
    (0,11)  &  .397     & .028\\
    (11,32) &  .833     & .222 \\
    (9,18)  &  .638     & .027 \\
    (16,32) & .041      & .102 \\
    
    \end{tabularx}
    
    };
    \draw [rounded corners=.5em] (table.north west) rectangle (table.south east);
    \end{tikzpicture}
    \label{table:karateclub}
    
\end{table}

\onecolumngrid

\begin{figure}[!ht]
    \centering
    \renewcommand{\arraystretch}{1.1}%
    \begin{tabularx}{\linewidth}{XXX}
    \includegraphics[width = \linewidth]{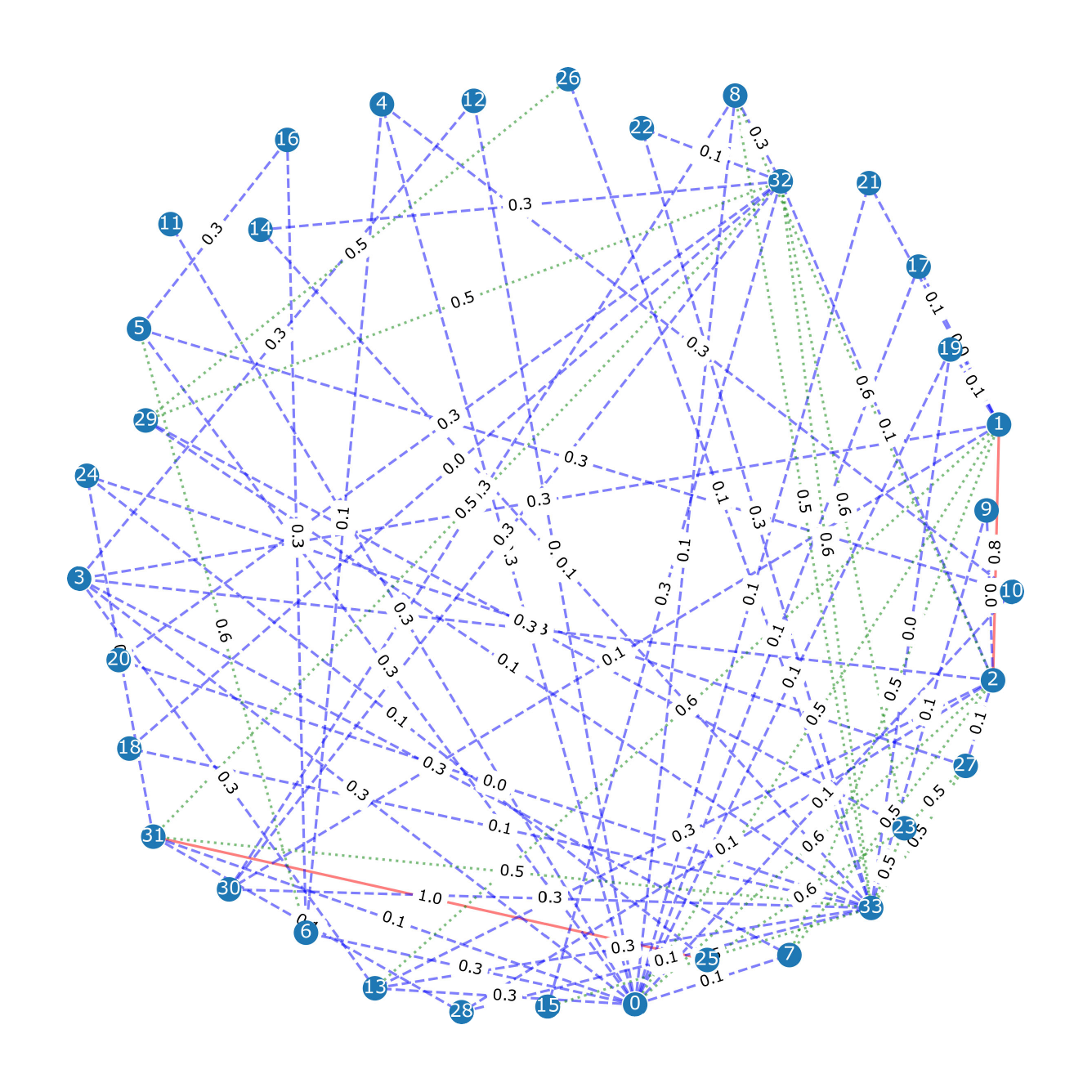} 
    &\includegraphics[width = \linewidth]{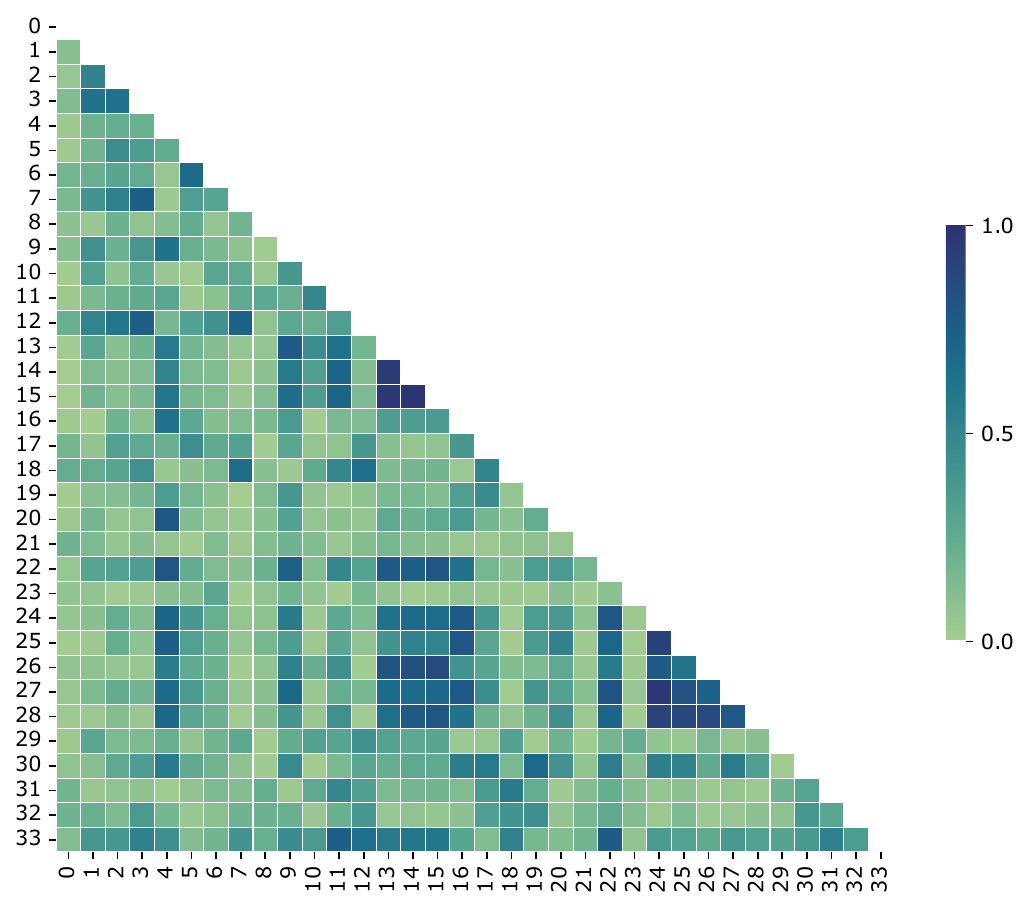} 
    &
    \includegraphics[width = \linewidth]{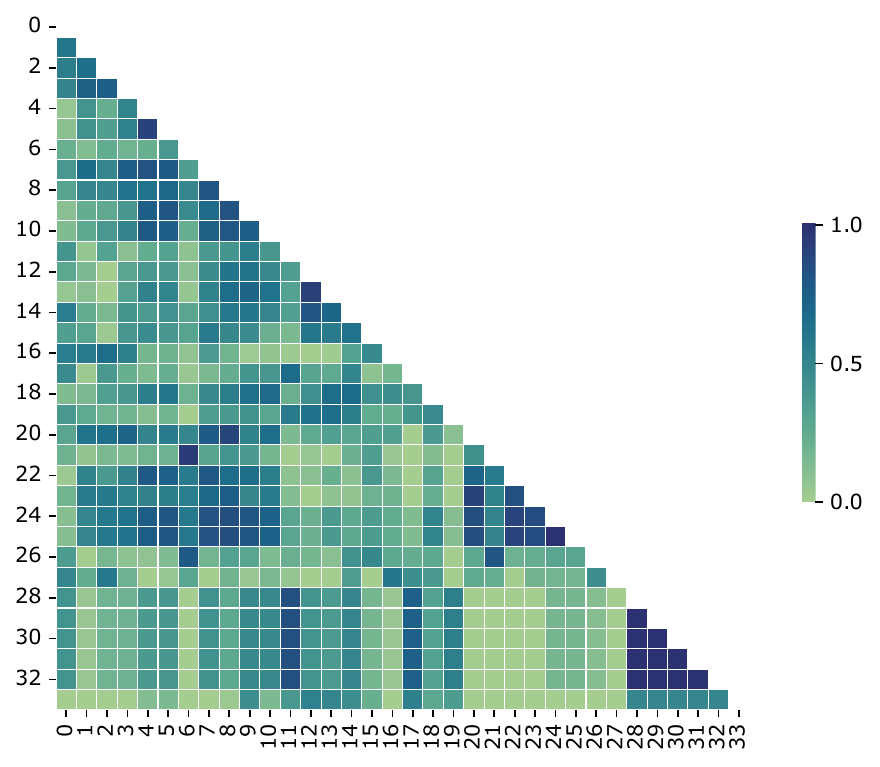}\\
    {(a) Graph topology with edge weights representing normalized interactions between members internal and external to a group.} 
    & {(b) Heat map of inner product values from QuOp, calculated with 1000 shots.}& {(c) Heat map of the cosine similarity from FastRP embeddings.}\\
    \end{tabularx}
    \caption{Zachary's Karate Club benchmark graph~\cite{zachary1977information} results.}  
    \label{fig:karate-club}
\end{figure}
\twocolumngrid

\subsubsection{GloVe Word Embeddings}\label{subsec:glove}

We examine the effectiveness of QuOp on word-based graphs. For this experiment, we utilize the widely recognized GloVe unsupervised learning algorithm developed by Pennington et al.~\cite{pennington2014glove}. We used word-to-word co-occurrence statistics derived from a GloVe embedding algorithm and build graphs using word tokens from the Wikimedia English benchmark Semantic Textual Similarity (STS)\cite{stsbenchmark}. We select sentences from the training text to create a vocabulary following the recommended procedure in GloVe. After processing the tokens, we set a minimum frequency of five appearances per word in the dataset. This approach results in final vocabularies of sizes 32 and 64. We trained the embeddings using the GloVe algorithm while specifying a parameter for the resulting vector's size or dimension, which we set to $50$ in our experiments. 

For QuOp, we used the GloVe word-to-word co-occurrence graph to create a co-occurrence weighted matrix. Particularly, in our experiments, we utilized the frequency value of word co-occurrence to form a weighted adjacency matrix. This matrix was used with the NetworkX package to construct a weighted but undirected graph. The frequency of word co-occurrence was normalized to create a weighted edge list. This list is passed to NetworkX API to generate graphs with weights ranging between zero and one. A graph transform yields adjacency matrices for the nodes without altering the graph's structure.
Figure~\ref{fig:textvecs} highlights word tokens for 32 and 64. We display the first set of word tokens on the left and an additional 32 on the right in their relative positions in two-dimensional space. The 64-node graph extends the 32-node graph by processing extra sentences from the GloVe training documents, allowing the graph to expand with new, unique words.
The 32-word graph's vocabulary lacks enough information in this two-dimensional space to suggest which words are more likely to be found together in a typical sentence. However, when the vocabulary size doubles using GloVe's training documents, word clusters emerge, indicating that these words often appear near each other in English sentences.
Table~\ref{table:info-distance-glove} compares the inner products of QuOp for measuring node pair similarity against GloVe's algorithm, which calculates the similarity between selected word pairs. It should be noted that GloVe does not distinguish between word types (nouns, verbs, etc.), only if a word is next to another word, and tends to be biased. This additional data shows that QuOp consistently outperforms GloVe embeddings. However, as noted in Section \ref{subsec:rando}, running QuOp on standard hardware has limitations, as the runtime significantly increases for vocabularies larger than 64.

\begin{figure}[!ht]
    \centering
    \includegraphics[width=1.1\linewidth]{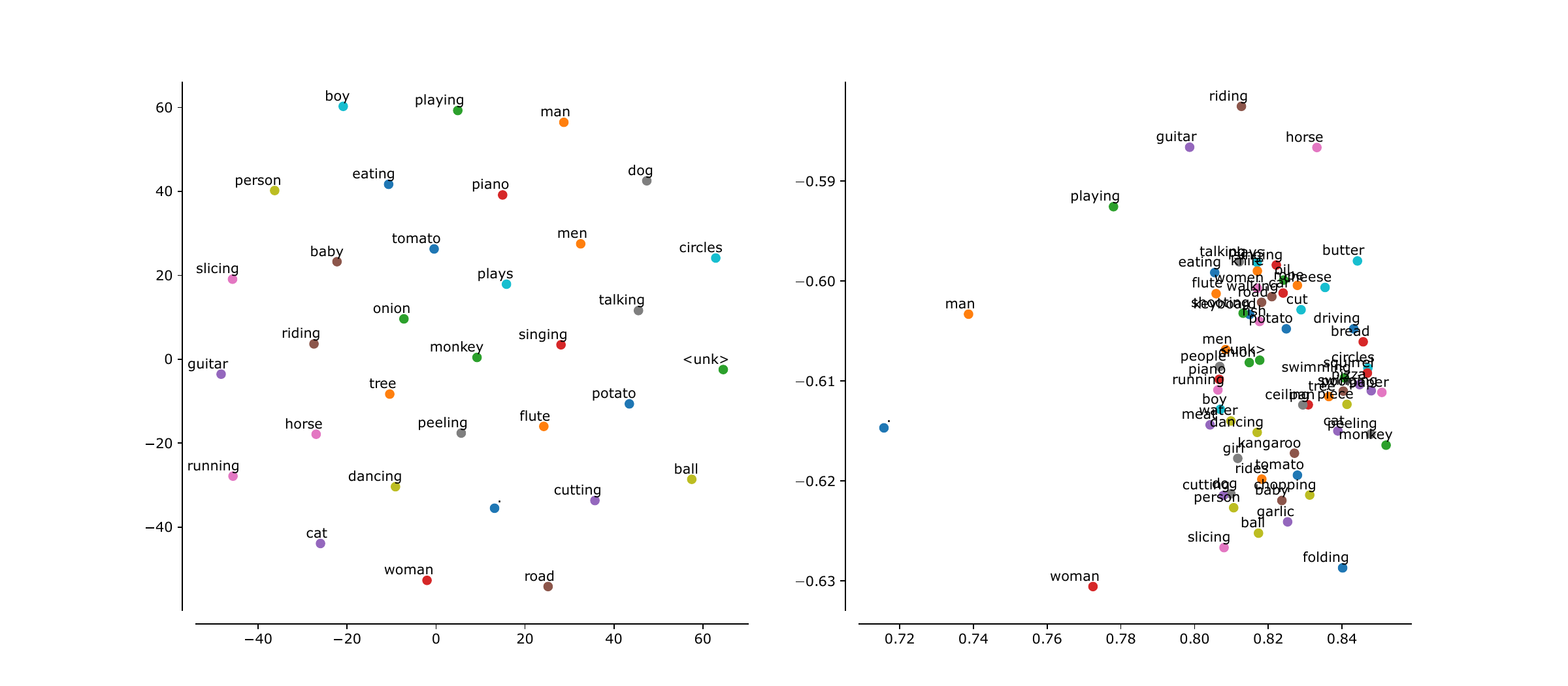}
    \caption{Vector embeddings for 32 and 64 word vocabularies in 2D space.}
    \label{fig:textvecs}
\end{figure}

\begin{table}[ht]
    \centering
    \footnotesize
    \caption{Pairwise node similarity for word-to-word co-occurrence. Similarity scores at ends of the range for QuOp. For co-occurrence graphs, pairs of nodes are selected and compared relative to node similarity given pre-trained vectors.}
    \begin{tikzpicture}
    \node (table) [inner sep=1pt] {
    \renewcommand{\arraystretch}{1.1}%
    \setlength{\tabcolsep}{4pt}

    \begin{tabularx}{.97\linewidth}{llXX}
    &&\multicolumn{2}{c}{\bfseries{Number of Nodes}}\\
    \bfseries{\underline{Vector Emb.}} & \bfseries{\underline{Node Pairs}} & \bfseries{\underline{32}} & \bfseries{\underline{64}}\\
    \multirow{4}{*}{GloVe} & man,playing & 0.7696 & 0.8893 \\
    & woman,playing & -0.2334 & 0.2418 \\
    & man,guitar    & 0.5085  & 0.7833 \\
    & woman,guitar  & -0.4187 & 0.0316 \\
    \multirow{4}{*}{QuOp}  			
    & man,playing   & 0.8493 & 0.7979 \\
    & woman,playing & 0.9279& 0.8907\\
    & man,guitar    & 0.9579 & 0.5584\\
    & woman,guitar  & 0.9873& 0.5705\\
    \end{tabularx}
    
    };
    \draw [rounded corners=.5em] (table.north west) rectangle (table.south east);
    \end{tikzpicture}
    \label{table:info-distance-glove}
    \end{table}

\section{Graphs, Lie Algebras, and Lie Groups}\label{sec:lie}
Lie groups and Lie algebras have an intricate relationship with quantum mechanics \cite{wigner2012group}. As such, there is a natural and beautiful connection with the QuOp algorithm and Lie matrix groups and algebras. Before going into details about this connection, we will describe Lie algebras and Lie groups in general, working towards establishing a foundation and context. 

\subsection{Background}

In quantum mechanics, the Lie group captures the symmetric group of physical systems, and the Lie algebra is the tangent space of this group, which is centered at the identity element \cite{wigner2012group,varadarajan2013lie,bincer2013lie}. As such, the Lie algebra is the infinitesimal transformation of the Lie group and may be thought of as the generating set of this group. 

For a general Lie matrix groups $G$, the respective Lie algebra is denoted as $\mathfrak{g}$, for which $A \in \mathfrak{g}$ and $t \in \mathbb{R}$ we have $e^{tA} \in G$ \cite{hall2013lie,bincer2013lie}. The relationship between the algebra and group is known as the \define{exponential map} (Definition 3.18 in Hall \cite{hall2013lie}). 

This structure gives rise to the following (Theorem 3.20 in Hall \cite{hall2013lie}):
\begin{enumerate}[(i)]
    \item $AXA^{-1} \in \mathfrak{g}$ for $X \in \mathfrak{g}$ and $A \in G$;
    \item $tX \in \mathfrak{g}$ for $t\in \mathbb{R}$;
    \item $X + Y \in \mathfrak{g}$;
    \item and $XY - YX \in \mathfrak{g}$.
\end{enumerate}
In fact, the bracket $[X,Y] = XY - YX$ is called the \define{Lie bracket} or the \define{commutator}, and arises natural from the derivative 
$$
\displaystyle \lim_{h \to 0} \frac{ e^{hX} Y e^{-hX} - Y }{h} = XY - YX.
$$ 
The commutator holds the \define{Jacobi identity} 
$$
\big[X,[Y,Z] \big] = \big[ [X,Y],Z \big] + \big[ Y,[X,Z] \big]
$$ 
and the \define{derivative property} 
$$
\big[ [X,Y],Z] \big] + \big[ [Y,Z],X \big] + \big[ [Z,X],Y \big] = 0
$$ (see Bincer for further information \cite{bincer2013lie}).

As laid out in Bincer \cite{bincer2013lie}, the characteristics of the Lie algebra yields a vector space, and this vector space is decomposable into basis vectors $\{X_a\}_{a \in \{1,\dots, d \} }$.  For $d$ parameters $x=(x_1, \ldots, x_d)$ any element in $G$ sufficiently near the identity element can be expressed as $\displaystyle \exp\Big( \sum_{a=1}^d ix_a X_a \Big)$. The set $\{X_a\}_{a \in \{1,\dots, d \} }$ is denoted as the \define{infinitesimal generators}. 

With the infinitesimal generators, the commutator plays an essential role in Lie algebras. For instance, for any two bases vectors $X_a$ and $X_b$, there exists a basis vector $X_c$ with  $[X_a,X_b] = ig_{ab}^{c} X_c$, where $g_{ab}^c$ are real numbers (since the matrices are Hermitian) and hold the equality $g_{ab}^c = -g_{ba}^c$ (from the commutator). The constants $g_{ab}^c$ are called \define{structure constants}. The structure constants are used to construct matrices, $T_a$, where $(T_a)_{bc} = -ig_{ab}^c$. These matrices are called the adjoint representation \cite{bincer2013lie}.

\subsection{Relation with QuOp}

With this algebraic structure and the QuOp algorithm, observe that for an undirected weighted graph $G=(E,V)$ and number of hops $h$, the set 
$$
   \Big\{ \minus i \cdot A^h_e \Big\}_{e \in E} \subset \Big\{ A \in \mathrm{GL}(N, \mathbb{C}) \ | \ A  = \minus A^{\dagger} \ \& \ \mathrm{Tr}(A) = 0\Big\} 
$$
for some dimension $N$. Recalling the exponential map, which stems from Schr{\"o}dinger's equation, $A \mapsto e^{\minus iA}$ is the reason for the multiplication $\minus i$ constant. This map, not unsurprisingly, is similar to the the representation of matrices near the identity element. For a directed graph applying the $\mathrm{hermAdj}$ function (described by Equation \ref{eq:herm-adj}) yields the same inclusion. 

For a quantum circuit with $n$-wires, the gates of this circuit are operators (or matrices) are contained in the Lie group of special unitary operators, 
$$
\mathrm{SU}(2^n) = \left\{ A \in \mathrm{GL}(N, \mathbb{C}) \ | \ A \cdot A^{\dagger} = I_{N} \ \& \ \det(A) = 1\right\}
$$ 
\cite{nielsen2001quantum}. It is well-known for the Lie group of special unitary operators that the Lie algebra $\mathfrak{su}(N)$ is equal to the set of skew-Hermitian matrices with a trace of zero (Proposition 3.24 in Hall \cite{hall2013lie}). Hence, 
$$
\mathfrak{su}(N) = \Big\{ A \in \mathrm{GL}(N, \mathbb{C}) \ | \ A  = \minus A^{\dagger} \ \& \ \mathrm{Tr}(A) = 0\Big\},
$$ 
and therefore $\Big\{ \minus i \cdot A^h_e \Big\}_{e \in E} \subset \mathfrak{su}(N)$.
\begin{re}
    In general, if $L$ is the discrete Laplacian of a graph then $\mathrm{Tr}(L) \neq 0$, and hence $L \not\in \mathfrak{su}(N)$. However, it should be emphasized that $e^{\minus iL} \in \mathrm{SU}(N)$, and is a viable operator for a quantum circuit, as was briefly noted in quantum random walks, Subsection \ref{subsec:motivation}.
\end{re}
Now, taking the infinitesimal generators into account, a subset $\{X_a\}_{a \in I}$ spans matrices in $\{ \minus i \cdot A^h_e \Big\}_{e \in E}$. Thus, these matrices generate a subalgebra of $\mathfrak{su}(N)$, and therefore, generates a subgroup of $\mathrm{SU}(N)$. 

In applications $N=2^n$ for $n \in \mathbb{N}$, and it would be common $n$ to be very large, which then the matrices  in $\{X_a\}_{a \in I}$ would also be quite large. Iterative applications of the Cartan decomposition \cite{gilmore2006lie} further breaks down the infinitesimal generators $\{X_a\}_{a \in I}$ into, eventually, the Pauli operators and CNOT gates. The decomposition display the generators of the subgroup and yield further insight into the respective flow of information within the graph. 

Of course, this decomposition may not line up with native gates of quantum hardware, but the Cartan decomposition of Lie algebras is able to decompose an arbitrary unitary operator into logical gates \cite{shende2005synthesis,goto2020semisimple,mansky2023near,vartiainen2004efficient,li2013decomposition}. 

For example, the Cartan decomposition algorithm in \cite{Fedoriakagithub} decomposes a special unitary operator into Pauli-$X$ gates, fully-controlled $X$ gates, fully-controlled $R_Y$ gates, and fully controlled $R_Z$ gates. 

\section{Discussion}\label{sec:discussion}

\subsection{Summary}
We derived a variational free method to embed nodes into the space of special unitary operator, $\mathrm{SU}(2^n)$. The method is built on top of the Lie subalgebra, $\mathfrak{su}(2^n)$, generated from local adjacency matrices. Thus, generates the subgroup of $\mathrm{SU}(2^n)$.

We presented a series of experiments designed to assess the efficacy of QuOp for comparing similarity between node pairs in graphs. Our investigation begins with the analysis of two small random graphs, a real-world social network, specifically the renowned Karate Club network~\cite{zachary1977information, chintalapudi2015survey}, and word-to-word co-occurrence graphs. Given the size of the random graphs and the Karate Club graph, we utilized the FastRP method to classically generate node embeddings and calculated similarity of nodes with cosine similarity. Across the three graphs, QuOp consistently displayed scores among similar and dissimilar nodes.   

We then delved into two distinct methodologies for assessing node pair similarity from word co-occurrence. For the word-to-word co-occurrence graph we leverage GloVe embeddings and QuOp. The experiments aim to rigorously evaluate the performance of these methods in capturing intricate relationships within the graph structures, shedding light on their respective strengths and limitations in 
the context of graph-based similarity comparisons. Given the extremely limited data, QuOp, again, consistently displayed the ability to distinguish similar and dissimilar nodes. Of course, as the level of information increases, GloVe's performance will significantly increase. However, due to the hardware limitations, we were unable to run these experiments.


\subsection{Potential Future Research}

\subsubsection{Advancements of the Embedding}

Graph embeddings, as vectors, when two vectors are compared with the typical cosine similarity or Spearman correlation the values fall in the interval $[\minus 1,1]$. Hence, embeddings have the ability to calculate both the correlation or the anticorrelation. This in contrast to the QuOp algorithm as the similarity scores are in the unit interval, $[0,1]$, and not able to describe that extra complexity of relationship. This shortcoming should be researched further. 

The graph embedding considered is variational free, however, there could be an advantage to adding tunable layers. Of course, the derivation of the layers tend to be ansatz motivated, hence intuition driving the derivation. The root structure of the Lie subalgebra generated from the local adjacency matrices from each node, with a given number of hops, will yield the ansatz of operators to leverage that will keep additional variational embedding layers within the subalgebra, and thus the within the subgroup of operators. Therefore retaining the latent structure of the graph. 

The modeling considered in the research essentially only considers a node from a homogeneous graph. Heterogeneous graphs tend to have a quantitative representation of each node as a manner to distinguish either collection of similar nodes, or individual nodes. Ergo, weighted nodes. Many of the techniques mentioned in Section \ref{subsec:compare} identify a wire for each node, which enables an easy incorporation of weights into a quantum circuit, that is, after the operators and layers have been identified. Since the number of wires are $\log_2$ of the length of the adjacency matrix, this makes the task difficult. One possible method, for each adjacency matrix add the respective matrix, where the entries are the weights of each node in the adjacency matrix, and placed in the respective entries in the adjacency matrix. Call this matrix the \textit{weighted node matrix}. This would ensure the weighted node matrix is Hermitian with zero trace. Therefore, these new matrices would generate an extended Lie subalgebra. This could complicate the structure of the node representation, and an implementation can utilize the Trotter-Suzuki formula \cite{suzuki1976generalized} into two operators, the adjacency generated operator and the weighted matrix operator.

\subsubsection{Different Graph Representations}
On the surface, graphs are simple objects that, coupled with a visual representation, are an intuitive and easy to understand structure of objects. However, the rich literature on classical and machine learning methods on graphs have displayed an intricate latent space of information embedded within a graph. In general, mathematics has a deep wealth of tools to extract latent information from objects. Below is a description of a potential path to extract information in an explainable and mathematically rigorous manner. Since this is only a potential path, an overarching outline is given.      

\paragraph{Brief Overview of Category Theory}\label{subsec:quivers-cat-overview}

Category theory is a relatively new, but well-established, branch of mathematics that uses maps to describe the relationship between structures that, at first glance, has seemingly different characteristics. 

A collections of objects and collection of respective morphisms between these object is called a \define{category}. A map between categories is called a \define{functor}, and a map between functors is defined as a \define{natural transformation} \cite{mac2013categories,riehl2017category}. Functors and respect natural transformations extract information from collections by mapping a category of interest to another category where it is easier to analyze.

For an illuminating example, consider the category of groups, denoted as \define{Grp}. This category is very intuitive as the objects are a collection of elements with a particular operation that make up a group and the morphisms are homomorphism (see NcWeeny \cite{mcweeny2002symmetry} for a background in group theory). To display a functor, we will connect \define{Grp} to the category of sets, denoted as \define{Set}. The category Set consists of sets as objects and total functions (defined on every element is its domain) as morphisms. There is a fairly intuitive functor between Grp and Set which takes the collection of elements with some structure to just a collection of elements, and homomorphisms to just a morphism that does not necessarily retain the structure of the group (since it is no longer necessary). This functor is called the forgetful functor, as the binary group operation is not retained, stripping the algebraic structure down to just the elements. See Mac Lane or Riehl 
 for a deeper description \cite{mac2013categories,riehl2017category}, or Bradley \cite{bradley2018applied} for a brief overview. 

\paragraph{Quiver Representations}\label{subsec:quivers-repre} 

To make a graph amenable to representation, it is first broken down into a category while keeping the structure in-tack through a quiver. A quiver then enables a graph to represent a vertex as a vector space and an edge as a linear operator between vector spaces, by first considering a free category. Given the arbitrary nature of a representation, which has the potential to be forgetful of the structure of the graph, maps between representations and, as a consequence, all representations will be considered. From these collective maps and representations a graph will then have a natural structure amenable to categorical theoretic techniques to map to quantum operators. Below a granular description of this technique is given.  

Given a directed graph $G=(V,E)$, a \define{quiver}, denoted as $Q$, is a category with $Q_0$ the set of vertices, $Q_1$ the set of arrows which are the edges, and two morphisms $s,t:Q_1 \to Q_0$ where $s$ sends an edge to its source node and $t$ sends an edge to its target node. A \define{path} is then a sequence of arrows $e_1\ldots e_m$ where $t(e_i) = s(e_{i+1})$ for $1\leq i \leq m-1$ \cite{crawley1992lectures,savage2005finite}. 

To make $Q$ more tangible, each node $v$  will be mapped via $\rho$ to a vector space $V_v$ with basis space $\mathbb{K}$. For an arrow (edge) with  $v \xrightarrow{e} v'$ we have the linear operator $V_v \xrightarrow{\rho(e)} V_{v'}$. Therefore, $Q$, via representation theory, is mapped to a collection spaces and morphisms, where the nodes are mapped to $\mathbb{K}$ vector spaces, and edges to linear homomorphisms between vector spaces \cite{dlab1976indecomposable,deng2008finite}.  

One may see such this representation is contained in the category of vector spaces, denoted as $\mbox{Vect}_{\mathbb{K}}$. This mapping may not necessarily consider the composition of maps, which captures each path in the graph. Adjusting for this shortcoming, we take a functor between the free category of $Q$ (category created by `freely' concatenating morphisms), denoted as $\mbox{FrCat}(Q)$, and the category of vector spaces $\mbox{FrCat}(Q) \xrightarrow{\phi} \mbox{Vect}_{\mathbb{K}}$. Thus, a quiver representation is a functor $\phi$. Since a functor has the potential to create bias we take a natural transformation, $\Psi$, to map between functors. Ergo, $\Psi: \phi \Rightarrow \phi'$. And therefore, a representation of $Q$ with base space $\mathbb{K}$, $\mbox{Rep}_{\mathbb{K}}(Q)$, is isomorphic to this category of functors. Hence, 
$$
    \mbox{Rep}_{\mathbb{K}}(Q) \cong \mbox{Vect}_{\mathbb{K}}^{\mbox{FrCat}(Q)},
$$
where $B^C$ denotes the category of functors $C\xrightarrow{\phi} B$ \cite{derksen2005quiver,schiffler2014quiver}. 

\paragraph{Representations of the Special Unitary 
 Group}\label{subsec:su-repre}

With the derived categorical structure of graphs, the goal is then to map this structure to quantum operators in the group $\mathrm{SU}(2^n)$ for some $n \in \mathbb{N}$. There are two potential paths to obtain this goal. The first is to take the representation of this group, and the second is to decompose the Lie algebra into its Dynkin diagram. 

Recall that a representation for a Lie group $L_G$ is a group homomorphism $\Pi : L_G \to \mathrm{GL}(N,\mathbb{K})$ \cite{hall2013lie}. Hence, a representation may be seen map as a subcategory to within $\mbox{Vect}_{\mathbb{K}}$ where the vector space stays the same dimension. Since the dimension of the vector space has an influence on the representation and it is not clear which dimension is optimal, and representations are not unique, all possible representations need to be considered. This foundation may lead to a derivation of a morphism to $\mbox{Rep}_{\mathbb{K}}(Q)$ in some manner.  

Dynkin diagrams is a method to describe a Lie algebra through the roots and the respective structure of the roots. This description, in turn, describes this structure via a graph \cite{hall2013lie,bincer2013lie}. Thus, the generators of a Lie algebra are identified, analyzed, then mapped to a graph. The root systems for Lie algebras are well-classified \cite{bincer2013lie}. With the Dynkin diagram graph structure one may then utilize the graph representation in Section \ref{subsec:quivers-repre}.

\section{Disclaimer}

About Deloitte: Deloitte refers to one or more of Deloitte Touche Tohmatsu Limited, a UK private company limited by guarantee (“DTTL”), its network of member firms, and their related entities. DTTL and each of its member firms are legally separate and independent entities. DTTL (also referred to as “Deloitte Global”) does not provide services to clients. In the United States, Deloitte refers to one or more of the US member firms of DTTL, their related entities that operate using the “Deloitte” name in the United States and their respective affiliates. Certain services may not be available to attest clients under the rules and regulations of public accounting. Please see  www.deloitte.com/about to learn more about our global network of member firms.

Deloitte provides industry-leading audit, consulting, tax and advisory services to many of the world’s most admired brands, including nearly 90\% of the Fortune 500® and more than 8,500 U.S.-based private companies. At Deloitte, we strive to live our purpose of making an impact that matters by creating trust and confidence in a more equitable society. We leverage our unique blend of business acumen, command of technology, and strategic technology alliances to advise our clients across industries as they  build their future. Deloitte is proud to be part of the largest global professional services network serving our clients in the markets that are most important to them. Bringing more than 175 years of service, our network of member firms spans more than 150 countries and territories. Learn how Deloitte’s approximately 457,000 people worldwide connect for impact at  www.deloitte.com.

This publication contains general information only and Deloitte is not, by means of this [publication or presentation], rendering accounting, business, financial, investment, legal, tax, or other professional advice or services. This [publication or presentation] is not a substitute for such professional advice or services, nor should it be used as a basis for any decision or action that may affect your business. Before making any decision or taking any action that may affect your business, you should consult a qualified professional advisor.
Deloitte shall not be responsible for any loss sustained by any person who relies on this publication. Copyright © 2024 Deloitte Development LLC. All rights reserved.

\bibliography{QEbib}
\end{document}